\newtheorem{theorem}{Theorem}[section]
\newtheorem{lemma}[theorem]{Lemma}
\newtheorem{corollary}[theorem]{Corollary}
\theoremstyle{definition}
\newtheorem{definition}[theorem]{Definition}
\newtheorem{example}[theorem]{Example}
\newcommand{\R}{\mathbb{R}}
\newcommand{\N}{\mathbb{N}}
\begin{document}

\title{Multidimensional online robot motion}
\author[J. Brown Kramer]{Josh~Brown~Kramer}
      \address{\tt Department of Mathematics and Computer Science\\
               Illinois Wesleyan University\\
               Bloomington IL  61701}
      \email{jbrownkr@iwu.edu}
\author[L. Sabalka]{Lucas~Sabalka}
      \address{\tt Department of Mathematical Sciences\\
               Binghamton University\\
               Binghamton NY  13902-6000
\newline       http://www.math.binghamton.edu/sabalka}
      \email{sabalka@math.binghamton.edu}

\begin{abstract}

We consider three related problems of robot movement in arbitrary dimensions: coverage, search, and navigation.  For each problem, a spherical robot is asked to accomplish a motion-related task in an unknown environment whose geometry is learned by the robot during navigation. The robot is assumed to have tactile and global positioning sensors.  We view these problems from the perspective of (non-linear) competitiveness as defined by Gabriely and Rimon.  We first show that in 3 dimensions and higher, there is no upper bound on competitiveness: every online algorithm can do arbitrarily badly compared to the optimal.  We then modify the problems by assuming a fixed clearance parameter.  We are able to give optimally competitive algorithms under this assumption.

\end{abstract}

\maketitle

\section{Introduction}\label{sec:intro}

This paper is about online sensor-based motion problems for robots in an unknown bounded $n$-dimensional environment.  Consider Bob, a spherical mobile robot with radius $r > 0$ at starting point $S$ in a space $X \subseteq \R^n$, where $X$ has finite diameter.  Bob is equipped with:

    \begin{itemize}
    \item a tactile sensor for feeling and tracing obstacle boundaries, and
    \item a precise global positioning sensor, which tells Bob its location using global coordinates on $X$.
    \end{itemize}
For our tasks, Bob will also be able to remember an amount of information proportional to the size of the space, but a priori Bob has no other knowledge of its surroundings.

For any point $p \in X$ and any fixed position for Bob, Bob is \emph{at} $p$ if $p$ is the location of Bob's center, and Bob \emph{occupies} $p$ if $p$ is within distance $r$ of Bob's center.

The three tasks we consider within this setup are:
    \begin{itemize}
    \item \emph{Cover}:  Describe an efficient way for Bob to move within $X$ to occupy every point in $X$ that can be occupied, and return to the starting point.  We denote this task by $COVER$, or $COVER_n$ if $n$ is known.
    \item \emph{Search}:   Given a target point $T$ with unknown coordinates (which is recognizable on contact), describe an efficient way for Bob to move within $X$ from $S$ to $T$.  We denote this task by $SEARCH$, or $SEARCH_n$ if $n$ is known.
    \item \emph{Navigate}:   Given a target point $T$ with known coordinates, describe an efficient way for Bob to move within $X$ from $S$ to $T$.  We denote this task by $NAV$, or $NAV_n$ if $n$ is known.
    \end{itemize}
To refer to one of these three tasks without specifying which, we will write $TASK$, or $TASK_n$ if $n$ is explicit.

The purpose of this paper is first to show that if $n > 2$ then, in a precise sense, there is no efficient algorithm to solve any of these problems.  We then show that with a minor modification to the problems, these tasks can be accomplished in an efficient manner, and we give efficient solutions.

Online motion algorithms in general are discussed frequently in robotics and computational geometry, and have been a recent active area of research.  There are many possible references to algorithms in this area, to which we only name the most relevant to our purposes.  For more detailed overviews, see for instance one of \cite{Berman, FiatWoeginger, Latombe, LaValle}.

As might be expected, sensor-based motion planning arises in a number of applications.  Examples include area coverage problems like cleaning public places, navigation problems like mail delivery in a city or moving packages in a factory, sample acquisition, and planetary exploration; the Mars rover uses autonomous online navigation algorithms \cite{ShenNagy}.

Results concerning online motion algorithms are almost always discussed in terms of the sensors with which the robots are equipped.  Often, but not always, robots are given visual sensors to be able to detect (nearby) objects within a line of sight.  However, problems requiring only tactile sensors do occur in situations where vision-based sensors are unrealistic.  For instance, navigation is often desired in abstract spaces, like the configuration space of a mechanical arm linkage, in which visual sensors, at least in their most literal interpretation, do not make sense.

The motion problems listed above have been frequently studied, but almost invariably in special instances.  Some of the earliest work on efficient robot motion is that of Lumelsky and Stepanov \cite{LumelskyStepanov}.  That work resulted in the BUG algorithms, which solve the $NAV_2$ problem for a (point) robot in the presence of arbitrary obstacles.  The BUG1 algorithm, described here in Section \ref{sec:CBUG}, was proven to run in time proportional to the lengths of perimeters of obstacles in $X$.  However, in terms of the length of the optimal path, BUG1 is not at all `competitive':  the path BUG1 takes can be arbitrarily long compared to an optimal path.  Thus, BUG1 is not competitive in the classical sense.  Papadimitriou and Yannakakis \cite{PapadimitriouYannakakis} provided the first competitive analysis of the $NAV_2$ problem in specific instances.  More recently, Gabriely and Rimon \cite{GabrielyRimon} have given a modification of BUG1, called CBUG, which is `optimally' competitive.  We also describe the CBUG algorithm in Section \ref{sec:CBUG}.  Gabriely and Rimon generalized the definition of competitiveness to characterize in what sense CBUG is optimally competitive -- namely, CBUG is quadratically competitive.  To analyze our algorithms, we use the Gabriely-Rimon definition of competitiveness as presented here in Section \ref{sec:competitiveness}.  Roughly, optimal competitiveness of an algorithm means that the path it generates has length, in the worst case, proportional to the optimal worst case length generated among all online navigators over all possible environments.  We quantify performance by measuring relative to the best offline path (ie the path generated by a robot with complete knowledge of $X$).  As with most algorithms for motion planning, both the BUG1 and CBUG algorithms are for $2$-dimensional spaces.

For the $SEARCH_2$ problem, a notable linearly competitive solution in a number of environments was given by \cite{BaezaYatesCulbersonRawlins}.

To the authors' knowledge, there are very few papers dealing with higher-dimensional sensor-based motion algorithms.  There is the paper of Cox and Yap \cite{CoxYap}, which extends the BUG algorithms to a 3-dimensional rod, and there are those papers using Choset and Burdick's Hierarchical Generalized Voronoi Graphs (HGVGs) \cite{ChosetBurdick} which work in higher dimensions but require visual sensors.  Roughly, an HGVG is a way of creating a roadmap along the lines of \cite{Canny} in higher dimensions.  These roadmaps are essentially a $1$-dimensional subspace of the navigable space (equidistant from $n-1$ obstacles), which can be created incrementally.  These HGVGs provide a nonheuristic navigation algorithm which is complete -- that is, is guaranteed to work.  However, Cox and Yap's results only apply to their particular problem, and Choset and Burdick assume visual sensors while not providing a bound on competitiveness.

It is clear that, for most environments $X$ of dimension $n \geq 3$, the $COVER_n$ task is actually impossible to solve.  For instance, assume that $X$ has a large codimension 1 cube as an obstacle.  Then Bob cannot possibly occupy every point very near the cube in finite time -- to do so, Bob's center would need to be at every point in a codimension-one cube (of distance $r$ from the original obstacle).  Thus, some slight modification of $COVER_n$ is necessary.

Interestingly, for $n \geq 3$ the optimal online distance for $SEARCH_n$ and $NAV_n$ can be arbitrarily bad compared to the optimal offline distance.  One result from this paper, to be made precise via Theorem \ref{thm:universallb} and Corollary \ref{cor:needmod}, is:
    \begin{theorem}
    If $n \geq 3$ then every algorithm that solves either $NAV_n$ or $SEARCH_n$ has no upper bound on competitiveness with respect to optimal length.
    \end{theorem}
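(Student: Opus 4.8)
The plan is to reduce the statement to a single adversarial construction. By the (Gabriely--Rimon) definition of competitiveness it suffices to exhibit, for every $M\in\N$, an instance (a space $X$ with start $S$ and, for $NAV_n$, a target $T$) in which the offline optimum $\lo$ is bounded by a fixed constant $C$ \emph{independent of} $M$, while every online algorithm is forced to traverse total length at least $M$. For then a nondecreasing competitive function $f$ would have to satisfy $M\le f(\lo)\le f(C)$ for all $M$, which is absurd. Since in $NAV_n$ the robot receives strictly more information than in $SEARCH_n$, it is enough to carry this out for $NAV_n$ in a way that makes the known coordinates of $T$ useless; the $SEARCH_n$ statement then follows from the same (or an a fortiori easier) gadget. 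We treat $n=3$; higher dimensions follow by crossing the obstacle with a fixed slab in the extra coordinates.

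For the gadget, place $S$ and $T$ at distance $2$ and separate them by a piece of a hyperplane --- a codimension-one ``wall'' --- so that Bob cannot travel from $S$ to $T$ without locating a small ``door'', i.e.\ a gap in the wall of width slightly more than $2r$. Two requirements are in tension: the door's location must keep $\lo\le C$, so it must lie within a constant of the segment $ST$; yet, because Bob has only a tactile sensor, \emph{certifying} where the door is must be forceable to cost at least $M$. We reconcile these by folding the part of the wall near $ST$ back and forth inside a fixed bounded box, so that its two-dimensional area exceeds a prescribed function of $M$ while it still occupies only a region of bounded diameter; any tactile robot, to rule out a door on a patch of wall, must sweep its surface, at cost proportional to that area. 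This folding is available precisely because $n\ge 3$: a codimension-one wall in the plane is one-dimensional, and a radius-$r$ tactile robot traces any curve confined to a bounded region in bounded time --- which is exactly why the planar algorithm CBUG is only quadratically, hence \emph{boundedly}, competitive. An equivalent option is to iterate: replace the door by a scaled-down copy of the whole gadget, nested $M$ times with geometrically decreasing scales, so that $\lo$ telescopes to a constant while the online cost accumulates additively to at least $M$. In either version the door (respectively, the innermost exit) is revealed at the \emph{last} place the given algorithm inspects.

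Because no fixed $X$ can defeat an algorithm tailored to it, the construction is run adaptively against the algorithm $A$ in question: whenever $A$'s robot touches a fresh patch of wall, the adversary declares that patch doorless, preserving the invariant that the still-unrevealed part of $X$ can be completed to a legal instance in which (i) a door exists, (ii) $S$ and $T$ are joined by a path of length $\le C$, and (iii) $X$ has finite diameter. The adversary maintains this until $A$ has moved total distance $\ge M$, and only then finishes $X$, hiding the door in the unrevealed region; one checks the resulting instance is genuine and witnesses the failure of every candidate competitive bound for $A$. For $NAV_n$ one fixes $T$ just on the far side of the wall from $S$, so its known coordinates say nothing about the door; for $SEARCH_n$ one additionally keeps $T$ itself unrevealed as long as possible.

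The main obstacle is the universality of the lower bound --- defeating \emph{every} algorithm, not a naive one. Concretely, one must rule out the $n$-dimensional analogue of CBUG, which in the plane succeeds by confining itself to a ball about $S$, covering it completely, and then doubling the radius. In dimension $n\ge 3$ this strategy has no analogue: completely covering a bounded region that meets a codimension-one obstacle is already impossible in finite time --- the same obstruction flagged in the introduction for $COVER_n$ --- so an algorithm can never ``clear'' a ball and move on. The work of the proof is to convert this qualitative impossibility into the quantitative statement that tracing enough wall to find the door is unavoidable for any strategy, while simultaneously keeping $\lo$ bounded by a constant that does not depend on the complexity parameter $M$. It is this last point that forces the folded (or nested) geometry rather than one enormous flat wall: on a flat wall an adversary that placed the door far from $ST$ would inflate $\lo$ and lose the argument, so the door must stay near $ST$ and yet remain expensive to find --- which is exactly what the folding buys.
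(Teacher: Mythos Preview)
Your folded-wall construction has a genuine gap. You claim that by folding a codimension-one wall inside a fixed bounded box its $(n-1)$-dimensional area can be made to exceed any prescribed $M$, and that a tactile robot must sweep this area at proportional cost. But Bob has radius $r>0$: for Bob to touch a patch of wall his center must sit at distance $r$ from it, and for the door to be \emph{usable} (so that $l_{opt}$ is finite at all) the door must lie on a patch that a radius-$r$ ball can actually reach. This forces any two accessible sheets of the folded wall to be separated by at least $2r$, so in a box of side $D$ the accessible wall area is at most of order $D^{n}/r$, and the sweep cost at most of order $D^{n}/r^{n-1}$ --- bounded, not arbitrary. The nested variant fails for the same reason: a gadget at scale $s$ costs the robot only $O(s^{n-1}/r^{n-2})$ to sweep, so with geometrically decreasing scales the online costs form a convergent series rather than accumulating to $M$.

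The paper's construction avoids this obstruction by a genuinely different mechanism. Instead of folding an \emph{obstacle}, it packs the \emph{free space} into many overlapping tubes: corridors that are $(r+\epsilon)$-neighborhoods of parallel line segments, with adjacent axes spaced only about $\kappa=2\sqrt{2r\epsilon+\epsilon^{2}}$ apart. Because the tubes share volume as subsets of $\R^{n}$, a region of bounded diameter can contain on the order of $(l_{0}/\kappa)^{n-2}\lfloor l_{0}/(2r')\rfloor$ of them, and for $n\ge 3$ this count blows up as $\epsilon\to 0$. A short geometric lemma shows the robot cannot slip directly between adjacent tubes (the pinch is too tight), so it must traverse each one nearly in full before the adversary reveals which single tube is unblocked at the far end. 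This yields online cost at least a constant times $l_{opt}^{n}/(\kappa^{n-2}r')$ with $l_{opt}$ fixed; sending $\epsilon\to 0$ gives the unbounded competitiveness. The overlap of navigable space --- not of obstacle surface --- is precisely the idea your proposal is missing.
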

Thus, some slight modifications of the $SEARCH_n$ and $NAV_n$ tasks are also necessary.

We modify $TASK_n$ to allow for a small amount of error, herein called the \emph{clearance parameter} $\epsilon$ (see Section \ref{sec:modifying}), which controls the narrowness of the paths we require Bob to follow.  We place no other constraints on our spaces $X$:  we do not require the obstacles be rectangles, polygons, convex, etc.  Although we modify $TASK_n$, our modifications can be physically negligible, as $\epsilon$ can be as small as desired.

We prove:
    \begin{theorem}[c.f. Theorem \ref{thm:universallb}]\label{thm:B}
    The modified $NAV_n$ and $SEARCH_n$ tasks have a universal lower bound on competitiveness with respect to optimal length $l_{opt}$ given by
        $$\frac{l_{opt}^n}{\kappa^{n-2}(r+\epsilon)},$$
where $\kappa = 2\sqrt{2r\epsilon + \epsilon^2}$.
    \end{theorem}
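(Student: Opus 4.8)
The plan is to argue by an online adversary construction, exactly in the spirit of the unmodified lower bound (Theorem~\ref{thm:universallb}) but now with finite corridor widths governed by the clearance $\epsilon$. It suffices to treat $NAV_n$: a $SEARCH_n$ strategy is in particular a $NAV_n$ strategy (it may simply ignore the supplied coordinates of $T$), and the labyrinth we construct will in any case render those coordinates useless to the robot, so the same instance and the same bound apply to both tasks. So fix an arbitrary online navigator and a target value $L$ for the optimal length; the goal is to produce an environment $X \subseteq \R^n$ on which this navigator travels at least $\lo^n/(\kappa^{n-2}(r+\epsilon))$, where $\lo \le L$ is the optimal $\epsilon$-clear path length in $X$.

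The environment will be a labyrinth inside an axis-aligned box of side $\Theta(L)$. Working in the robot's configuration space --- the set of admissible centers, i.e.\ the complement of the obstacles dilated by $r+\epsilon$ --- the labyrinth is a disjoint union of long ``corridors'': one \emph{solution corridor} running from $S$ to $T$ of length $\approx \lo$, together with many \emph{dead-end corridors}, each also of length $\approx \lo$, grafted onto the route at junctions the robot cannot distinguish from the true continuation toward $T$ without entering them. The heart of the argument is a packing estimate: an $\epsilon$-clear corridor that the radius-$r$ ball can actually traverse must occupy transverse $(n-1)$-volume comparable to $\kappa^{n-2}(r+\epsilon)$, and this density is attainable, so roughly $\Theta\!\left(L^{n-1}/(\kappa^{n-2}(r+\epsilon))\right)$ pairwise-disjoint corridors fit across the box. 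The intuition behind the exponents: in one direction transverse to a corridor the ball genuinely needs $\Theta(r+\epsilon)$ of room in order to be present and move, but in the remaining $n-2$ transverse directions the obstacle ``gates'' through which the ball must thread can be spaced only $\kappa = 2\sqrt{(r+\epsilon)^2-r^2}$ apart --- $\kappa$ being the diameter of the disk cut from the swept $(r+\epsilon)$-ball by a hyperplane at distance $r$ from its center --- so along those directions corridors need only $\kappa$-separation. This squeezing is available precisely when $n\ge 3$; it degenerates to the familiar $\Theta\!\left(L^2/(r+\epsilon)\right)$ two-dimensional comb at $n=2$, and it is exactly what produces the factor $\kappa^{n-2}$.

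The adversary then plays the standard game: it maintains an environment consistent with everything the robot has sensed (boundary points touched, positions occupied), and whenever the robot commits to entering a corridor it reveals a wall sealing that corridor's far end, turning it into a dead end, so long as at least one not-yet-entered corridor can still be designated the solution corridor; all revealed walls are chosen to respect $\epsilon$-clearance and to keep the eventual optimal length $\le L$. A counting argument --- a potential that loses one admissible ``solution'' designation per corridor entered --- forces the robot to enter all but a bounded number of the $\Theta\!\left(L^{n-1}/(\kappa^{n-2}(r+\epsilon))\right)$ corridors before the adversary is finally obliged to place $T$ at the end of the last one, and each such excursion costs the robot at least $\approx \lo$ of travel (in and back out). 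Multiplying, the online path length is at least a constant times $\frac{L^{n-1}}{\kappa^{n-2}(r+\epsilon)}\cdot L = \frac{L^n}{\kappa^{n-2}(r+\epsilon)} \ge \frac{\lo^n}{\kappa^{n-2}(r+\epsilon)}$; tuning the side length and absorbing the junction and boundary corrections into the slack between $\lo$ and $L$ removes the stray constant and leaves the stated bound.

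The main obstacle is the geometric packing lemma. One must exhibit an explicit gate gadget for which (i) the radius-$r$ ball provably clears the gadget with margin $\epsilon$ only by confining its center to an $(n-2)$-dimensional window of diameter $\kappa$; (ii) stacks of such gadgets assemble into genuinely traversable corridors of length $\approx \lo$; and (iii) the corridors, their junctions, and the hidden solution route are simultaneously realizable by a single obstacle set inside a box of side $\Theta(L)$, remaining consistent as the adversary reveals its sealing walls. Pinning the constants in (i)--(iii) exactly is what upgrades the estimate from ``$\Theta$ of $\lo^n/(\kappa^{n-2}(r+\epsilon))$'' to the clean bound claimed; the adversary's forcing argument, by contrast, is essentially routine once the geometry is in place.
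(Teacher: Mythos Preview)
Your proposal is correct and takes essentially the same approach as the paper: the paper's explicit construction (the ``parallel corridor spaces'' $PC(l_0,\epsilon,r,n)$) packs parallel corridors between a start room and a target room with spacing just above $\kappa$ in $n-2$ transverse directions and $2r'$ in the remaining one, then adversarially blocks all corridors except the one the algorithm visits last---precisely your packing count and your adversary game. The only cosmetic difference is topology (the paper uses one start room feeding many parallel corridors rather than a branching tree with junctions), and the paper carries out explicitly the ``gate gadget'' verification you flag as the main obstacle (its Lemma~\ref{lem:forcingbacktracks} and the flap-placement calculation).
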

For small $\epsilon$, the value $\kappa$ is approximately $2\sqrt{2r\epsilon}$.

We go on to present algorithms solving modified $TASK_n$.  Our algorithm solving modified $COVER_n$ is called $CBoxes$, and our algorithm solving modified $NAV_n$ and modified $SEARCH_n$ is called $Boxes$.  Our algorithms are optimally competitive:

    \begin{theorem}[c.f. Theorems \ref{thm:CBoxesWorks} and \ref{thm:CUpperBound}]\label{thm:C}
    The algorithm CBoxes solves the modified $COVER_n$ problem and is optimally competitive with an upper bound on competitiveness with respect to optimal length $l_{opt}$ given by $cl_{opt}+d$, where $c$ and $d$ are constants depending on $r$, $n$, and $\epsilon$.
    \end{theorem}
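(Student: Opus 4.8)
The plan is to prove Theorem~\ref{thm:C} in three pieces, matching Theorems~\ref{thm:CBoxesWorks} and~\ref{thm:CUpperBound}: \textbf{(i)} \emph{correctness} --- that $CBoxes$ terminates and, on termination, Bob has occupied every occupiable point of the modified $COVER_n$ instance and is back at $S$; \textbf{(ii)} the \emph{upper bound} --- that the path $CBoxes$ produces has length at most $c\lo+d$; and \textbf{(iii)} \emph{optimality} --- that no online algorithm for modified $COVER_n$ can beat linear growth in $\lo$, so the bound in (ii) is tight up to constants. The common machinery is a fixed axis-aligned tiling of $\R^n$ into cubes of side length $\delta=\delta(r,n,\epsilon)$, chosen small enough (on the order of $\kappa$) that (a) whether Bob may legally occupy a configuration inside a given cube, and whether Bob can move between two face-adjacent cubes, is determined by purely local tactile data, and (b) the volume estimate of step (ii) goes through. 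Call a cube \emph{admissible} if it meets the clearance-$\epsilon$ free configuration space $\hC$, and let $N$ be the number of admissible cubes in the component of the face-adjacency graph containing the cube of $S$. $CBoxes$ is then an incremental depth-first (equivalently, spanning-tree-style) exploration of that graph: on first entering an admissible cube it sweeps $\hC$ restricted to that cube with a fixed local pattern, tactilely probes its $\le 2n$ neighbours, recurses into the unexplored admissible ones, and backtracks, traversing the resulting tree as an Euler tour so that Bob ends at $S$.

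For \textbf{(i)}, $N<\infty$ because $X$ has finite diameter, so the exploration halts. Completeness holds because every occupiable point of the instance lies within distance $r$ of some point $p\in\hC$ reachable from $S$; for $\delta$ small relative to $\kappa$, connectivity of $\hC$ transfers to the cube graph, so the cube containing $p$ is admissible and reachable, hence visited and swept, hence the point is occupied. Two purely local facts (with constants depending only on $n$ and the fixed geometry) must be verified: that the sweep pattern genuinely covers $\hC$ intersected with a single cube, and that the tactile sensor suffices to recognize admissibility and face-adjacency at clearance $\epsilon$.

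For \textbf{(ii)}, the Euler tour enters each admissible cube $O(1)$ times and each visit costs Bob at most a constant $c_1=c_1(r,n,\epsilon)$ (a cube diameter plus a local sweep), so the online length is at most $c_1 N+c_2$. It remains to show $\lo\ge c_3 N-c_4$. The offline optimal cover path is a rectifiable curve of length $\lo$ traced by a radius-$r$ robot, so the region it occupies is the $r$-neighbourhood of that curve, which by the tube (Minkowski) estimate has $n$-volume at most $\alpha\lo+\beta$ for constants $\alpha,\beta$ depending only on $r,n$. That region contains every occupiable point, in particular all of $\hC$, since a robot centred at $p\in\hC$ occupies $p$. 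Because we work at clearance $\epsilon$, $\hC$ is open and ``fat'': inside each admissible cube one can inscribe a ball of radius on the order of $\min(\delta,\epsilon)$ lying in $\hC$, and balls inscribed in distinct cubes are disjoint, so $\mathrm{vol}(\hC)\ge v_0 N$ for some $v_0=v_0(r,n,\epsilon)>0$. Chaining, $v_0 N\le\alpha\lo+\beta$, so $N\le c_3'\lo+c_4'$ and the online length is at most $c_1(c_3'\lo+c_4')+c_2=c\lo+d$.

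The step I expect to be the main obstacle is this volume lemma --- specifically, producing the inscribed ball and pinning down $v_0$: one must argue that every reachable admissible cube carries a definite, non-overcounted amount of $\hC$-volume even when it merely grazes $\hC$, and this is exactly where both hypotheses, $\delta$ small compared to $\kappa$ and clearance $\epsilon>0$, are used essentially (with $\epsilon=0$, $\hC$ can be lower-dimensional and the argument collapses, consistent with Theorem~\ref{thm:universallb}). Finally, for \textbf{(iii)}: on the obstacle-free family $X=[0,m]^n$ the occupiable region has $n$-volume $\Theta(m^n)$, so (by the tube estimate together with an explicit boustrophedon cover) $\lo=\Theta(m^n)$, while any covering path --- in particular any online one --- has length at least $\lo$; as $m\to\infty$ this shows no sublinear competitiveness function is attainable, so the bound $c\lo+d$ is optimal up to constants.
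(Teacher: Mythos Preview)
Your overall strategy for the upper bound --- bound the online length by $c_1 N$, then bound $N$ by a tube-volume estimate on the optimal path --- is different from the paper's, which never invokes volume. The paper instead (a) bounds the number $c$ of grid cubes that the optimal path $o$ actually passes through by $3^n(\lo/l+1)$, and then (b) shows every cube CBoxes visits lies in a ball of fixed radius about some cube on $o$, giving $N\le c\cdot[(3.5\epsilon+2r)/l]^n$. Your tube argument is a perfectly reasonable alternative to (a)+(b), but as written it has a small technical slip: a cube that merely \emph{meets} $\hC$ need not contain a ball of any definite radius inside $\hC$. The fix is to bound $N$ not by $\mathrm{vol}(\hC)$ but by $\mathrm{vol}(N_{\delta\sqrt n}(\hC))$: each admissible cube is contained in this neighbourhood, and since $\hC\subset N_{r'}(o)$ the neighbourhood sits inside a tube of radius $r'+\delta\sqrt n$ around $o$.

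The genuine gap is more serious: you assert that CBoxes explores exactly the admissible cubes (those meeting $\hC$), and your length bound is $c_1 N$. That assertion is false for the algorithm as defined. Bob has radius $r$, not $r'$; the only way a cube $D$ gets colored Red is if Bob physically hits an obstacle while moving toward its center. The paper's correctness argument shows that whenever this happens, $D\cap\hC=\emptyset$ --- but it does \emph{not} show the converse. Bob can slip through a corridor of width between $2r$ and $2r'$ into a region disjoint from $\hC$ and then explore arbitrarily many non-admissible cubes there, none of which your $N$ counts and none of which the optimal $r'$-cover has any obligation to approach. Tactile sensing at radius $r$ cannot detect this: Bob cannot feel whether a radius-$r'$ robot would fit. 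The paper confronts this explicitly: it says CBoxes ``does not necessarily [solve COVER] competitively'' in general, and Theorem~\ref{thm:CUpperBound} carries the hypothesis that $X$ has \emph{no bottlenecks}. Under that hypothesis every cube Bob visits or attempts is within $r'$ of an $r'$-path from $S$, which is what makes step (b) above (or the repaired version of your volume argument) go through. You need either to add that hypothesis or to change the algorithm so that it can recognise non-admissible cubes; the paper notes that the latter would require a short-range visual sensor.
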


    \begin{theorem}[c.f. Theorems \ref{thm:BoxesWorks} and \ref{thm:easyUpperBound}]\label{thm:D}
    The algorithm Boxes solves the modified $NAV_n$ and $SEARCH_n$ problems and is optimally competitive with an upper bound on competitiveness with respect to optimal length $l_{opt}$ given by
        $$c \frac{l_{opt}^n}{\epsilon^{n-1}}+d,$$
    where $c$ is a constant depending on $n$ and $d$ is a constant depending on $n$ and $\epsilon$.
     \end{theorem}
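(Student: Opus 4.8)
The plan is to follow the Gabriely--Rimon window-doubling template, now in $n$ dimensions and with $CBoxes$ playing the role that a bounded exploration routine plays for $CBUG$. Recall that $Boxes$ proceeds through a nested sequence of axis-parallel boxes (``windows'') $B_0 \subseteq B_1 \subseteq \cdots$ centered at $S$ with side lengths $L_i = 2^i L_0$; on window $B_i$ it runs the bounded-window version of the $CBoxes$ coverage strategy in the space $B_i \cap X$, treating $\partial B_i$ as an additional obstacle boundary, and it halts the instant it occupies $T$ (for $SEARCH_n$) or arrives at the known location of $T$ (for $NAV_n$), thereafter following its recorded map to $T$; if instead $B_i \cap X$ is covered without encountering $T$, it advances to $B_{i+1}$. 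I would prove the two assertions in turn: (a) $Boxes$ is complete for the modified $NAV_n$ and $SEARCH_n$ tasks, and (b) the total path length is at most $c\, l_{opt}^{\,n}/\epsilon^{\,n-1} + d$; optimality is then immediate from Theorem~\ref{thm:universallb}.

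For completeness, note that $X$ has finite diameter, so $B_k \supseteq X$ for some $k$, and on that window the bounded $CBoxes$ routine occupies every $\epsilon$-reachable point of $X$ by Theorem~\ref{thm:CBoxesWorks}; in particular it occupies $T$, so $Boxes$ terminates and reports a correct path. For the competitive bound the key quantitative refinement is: \emph{if $T$ is joined to $S$ by an $\epsilon$-clear path $\gamma$ of length $\ell$ and $\gamma$ lies at distance at least $r+\epsilon$ from $\partial B_i$, then the routine on $B_i$ reaches $T$.} Indeed, adjoining $\partial B_i$ as an obstacle does not reduce the clearance of any point already at distance $\ge r+\epsilon$ from it, so $\gamma$ still witnesses that $T$ is $\epsilon$-reachable within $B_i \cap X$, and Theorem~\ref{thm:CBoxesWorks} applies verbatim to $B_i \cap X$ (one must check here that the bounded routine does not secretly rely on global connectivity of $X$). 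Since every point of an optimal $\epsilon$-clear path lies within $l_{opt}$ of $S$, and the nearest point of $\partial B_i$ to $S$ is at distance $L_i/2$, the hypothesis holds once $L_i \ge 2\,l_{opt} + 2(r+\epsilon)$ (for $NAV_n$ one also needs $T\in B_i$, i.e. $L_i \ge 2|ST|$, and $|ST|\le l_{opt}$). Hence $Boxes$ stops at the first window whose side satisfies $L_k \le \max\{\,L_0,\ 4\,l_{opt} + 4(r+\epsilon)\,\}$.

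For the cost, Theorem~\ref{thm:CUpperBound} applied to $B_i\cap X$ bounds the length spent on window $B_i$ by $c_1\, l^{\mathrm{cov}}_{opt}(B_i\cap X) + d_1$, where $l^{\mathrm{cov}}_{opt}$ is optimal cover length. A box of side $L_i$ is covered by the obvious boustrophedon sweep whose rows are spaced $\Theta(\epsilon)$ apart in each of the $n-1$ transverse directions, and obstacles from $X$ cannot increase the optimal cover length of what lies inside $B_i$; hence $l^{\mathrm{cov}}_{opt}(B_i \cap X) = O\!\left(L_i^{\,n}/\epsilon^{\,n-1}\right)$ and the per-window cost is $O\!\left(L_i^{\,n}/\epsilon^{\,n-1}\right)$. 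The closing map-following walk is no longer than the last window's cover and is absorbed. Summing the geometric series $\sum_{i=0}^{k} L_i^{\,n} \le \frac{2^n}{2^n-1}\,L_k^{\,n}$ and substituting $L_k = O(l_{opt}+r+\epsilon)$ gives a total of at most $c\, l_{opt}^{\,n}/\epsilon^{\,n-1} + d$ once the $O(\log l_{opt})$ copies of the additive $d_1$, the bounded contribution of the $l_{opt}$-independent initial windows, and the $r$-dependence are folded into the constants; this is routine bookkeeping that I would not dwell on.

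Finally, optimality follows by comparing exponents: Theorem~\ref{thm:universallb} gives a universal lower bound on the competitiveness of \emph{every} algorithm for the modified $NAV_n$ and $SEARCH_n$ tasks that is of degree $n$ in $l_{opt}$ (the environment parameters $r,\epsilon$ being fixed), and the upper bound just proved for $Boxes$ is also of degree $n$ in $l_{opt}$, so $Boxes$ lies in the optimal competitiveness class. I expect the main obstacle to be not the doubling argument itself --- which is structurally the same as for $CBUG$ --- but the two places where $n$-dimensionality and truncation to a window interact with $CBoxes$: establishing completeness of the coverage routine on the truncated space $B_i \cap X$ (handled above via clearance preservation, but requiring care that boundedness is respected), and pinning the optimal cover length of an obstacle-filled window at $O(L_i^{\,n}/\epsilon^{\,n-1})$ rather than something larger driven by obstacle complexity --- which is precisely the point where a systematic cell-by-cell sweep, rather than boundary following, is what makes the bound go through.
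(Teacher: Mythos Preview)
Your doubling-and-summing skeleton is fine and matches the paper, but the per-window cost estimate has a real gap. You bound the cost on window $B_i$ by invoking Theorem~\ref{thm:CUpperBound} to get $c_1\,l^{\mathrm{cov}}_{opt}(B_i\cap X)+d_1$ and then assert that $l^{\mathrm{cov}}_{opt}(B_i\cap X)$ is at most the length of a boustrophedon sweep of the \emph{empty} box because ``obstacles from $X$ cannot increase the optimal cover length.'' That last assertion is false: adding obstacles shrinks the set to be covered but can lengthen the shortest covering tour. The parallel-corridor spaces from Section~\ref{sec:lb}, placed inside $B_i$, already exhibit this --- their optimal cover length is $\Theta(L_i^n/(\kappa^{n-2}r'))$, which exceeds the empty-box sweep length $\Theta(L_i^n/(r')^{n-1})$ when $\epsilon$ is small. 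Separately, Theorem~\ref{thm:CUpperBound} is stated only for spaces without bottlenecks, and truncating $X$ by a window wall can create bottlenecks even if $X$ has none; you do not address this hypothesis.

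The paper sidesteps both issues by not routing through $l^{\mathrm{cov}}_{opt}$ at all. It observes directly that within a fixed window the algorithm performs a depth-first search on the cube graph, so the path length in that iteration is at most $2l$ times the number of grid cubes meeting the window; bounding that count by $(4a/l)^n$ (enclosing the ellipsoid in a hypercube of side $2a$) already gives the $O(L_i^n/\epsilon^{n-1})$ per-window bound, after which your geometric-series argument goes through unchanged. Two smaller points: the paper's $Boxes$ uses ellipsoids with foci $S$ and $T'$ (a sphere for $SEARCH$), not axis-parallel boxes centered at $S$, though this is immaterial to the asymptotics; and your $O(\log l_{opt})$ copies of $d_1$ cannot be ``folded into the constants'' $d$ --- they must be absorbed into the $l_{opt}^n$ term.
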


This paper is organized as follows.  In Section \ref{sec:competitiveness}, we define the notions of competitiveness that Gabriely and Rimon use, modifying it slightly for our purposes.  In Section \ref{sec:CBUG}, we describe the CBUG algorithm.  We modify the definition of the problems by introducing clearance parameter in Section \ref{sec:modifying}.  In Section \ref{sec:lb}, we prove Theorem \ref{thm:B} by constructing spaces realizing the given bounds.  In Section \ref{sec:Algorithms}, we define the CBoxes and Boxes algorithms, and in Sections \ref{sec:CBoxes} and \ref{sec:Boxes}, we analyze the CBoxes and Boxes algorithms, respectively, proving Theorems \ref{thm:C} and \ref{thm:D}.  In Section \ref{sec:mathmotivation}, we discuss some of the mathematical motivation underlying the algorithms in this paper.  Finally, in Section \ref{sec:improvements} we describe a number of ways of improving the execution of the various algorithms.

A computer simulation of the algorithms contained herein is available online at\\ \texttt{http://www.math.binghamton.edu/sabalka/robotmotion}.

The second author would like to thank Elon Rimon and Misha Kapovich for many interesting conversations on this material.

\section{Competitiveness} \label{sec:competitiveness}

Recall from the Introduction that Bob is a spherical mobile robot with the task of moving in an unknown environment $X$.  We want to discuss how ``good'' a particular online algorithm is for solving the given task.  To do so, we present a notion of \emph{competitiveness} for online algorithms.  The definition here is adapted from the generalized notion of competitiveness appearing in \cite{GabrielyRimon}, and allows for an arbitrary functional relationship between an algorithm's performance and the optimal performance, not just the traditional linear dependence.

Let $P$ be a task, $NAV_n$ for example.  An \emph{instance} of $P$ is a situation in which the task should be completed.  For online navigation, the instances are given by tuples $(X,S,T,r)$, with $X$ the space, $S$ the start point, $T$ the target point, and $r$ the radius.  We will denote the set of all instances for a given task by $\mathcal{I}$.
A \emph{parameter} is a function $\pi : \mathcal{I} \to \R$.  For example, define $t_{opt} : \mathcal{I} \to \R$ to be such that $t_{opt}(I)$ is the optimal time, over all algorithms, to complete instance $I$.  For any algorithm $A$ which solves $P$, we wish to bound the time required for $A$ to complete an instance by a function function of some parameter of the instance (typically, $t_{opt}$).  To that end we introduce the following definitions.

\begin{definition}
Let $A$ be an algorithm solving a task $P$ and let $\pi$ be a parameter.  Denote by $t_A$ the function which takes an instance of $P$ and outputs the total execution time for $A$ on that instance.  Define $f_{A,\pi} : \R \to \R$ to be the function given by
    $$f_{(A,\pi)}(x) = \sup_{I \in \mathcal{I}} \{t_A(I) : \pi(I) \leq x\}.$$
\end{definition}

Thus $f_{(A,\pi)}(x)$ tells us the most time $A$ could take on an instance if $\pi$ is no more than $x$.

\begin{definition}[Competitiveness]
Let $P$ be a task.  Let  $g: \R \to \R$ be a function.  We say that $g$ is a \emph{universal asymptotic lower bound on competitiveness with respect to $\pi$} if for every algorithm $A$ solving $P$, $f_{(A,\pi)} \in \Omega(g)$.  We will sometimes simply call $g$ a universal lower bound.  An algorithm $A$ solving task $P$ is \emph{$O(g)$-competitive with respect to $\pi$} if $f_{(A,\pi)} \in O(g)$.  We say that $A$ is \emph{optimally competitive} if there is $g$ such that $g$ is a universal lower bound on competitiveness and $A$ is $O(g)$-competitive.
\end{definition}

This definition of competitiveness allows for competitiveness to be quadratic, logarithmic, exponential, etc.  For example, an algorithm $A$ being (linearly) competitive in the traditional sense is equivalent to being $O(t)$-competitive, which means $t_A \leq c_1t_{opt} + c_0$ for constants $c_0$ and $c_1$.  As a linear polynomial clearly gives a universal lower bound for competitiveness, a linearly competitive algorithm is always optimally competitive.

We now turn to our motion tasks.  First, note that Bob's position uniquely determines and is uniquely determined by the coordinates of Bob's center.  We will consistently refer to Bob's position as a point via this identification.  This allows us to talk about, for instance, Bob traversing a path in $X$.  The total execution time of an algorithm $A$ solving $TASK$ may be broken up into physical travel time and onboard computation time.  We will neglect onboard computation time when measuring optimality of our algorithm.  This is a defensible assumption, as physical motion typically takes several orders of magnitude longer than onboard computation.  To simplify our analysis, we will assume that Bob always travels at a constant speed. This correlates physical travel time with the length, $l_A$, of the path Bob travels in $X$ while executing $A$, and we may replace $t_A$ with $l_A$ in our definitions above.  These simplifications allow us to compare our performance with that of an optimal offline algorithm (for which computation time is not an issue) by comparing lengths of paths.  

The optimal offline length, $l_{opt}$, is a reasonable parameter through which to discuss competitiveness.  However, we will see that for dimension $n \geq 3$ and any algorithm $A$ that solves $NAV_n$ or $SEARCH_n$, we have $l_{(A,l_{opt})}(t) = \infty$ for every $t$.  Thus, bounding path length requires more knowledge of the space than just the optimal path length.  We will modify $l_{opt}$ and $TASK$ slightly (in Section \ref{sec:modifying}) to obtain bounds on competitiveness.

Before we turn to our modification, we present what is known for the $NAV_2$ problem, which will serve as motivation for parts of our algorithms.

\section{Solving $NAV_2$:  the CBUG Algorithm} \label{sec:CBUG}

Our algorithms build on ideas from an optimally competitive algorithm for the $NAV_2$ task of navigating unknown $2$-dimensional environments, called CBUG \cite{GabrielyRimon}.  The basic CBUG algorithm is itself a refinement of a classical but non-optimally-competitive algorithm, called BUG1 \cite{LumelskyStepanov}.  In this section, we present the BUG1 and CBUG algorithms.

BUG1 is guaranteed to yield a solution - that is, Bob will move from $S$ to $T$ if possible - but has no upper bound on competitiveness.  The BUG1 algorithm is as follows:\\

\begin{center}
\fbox{
\parbox{4.4in}{
{\center {\underline{\bf BUG1$(S,T)$}} \\}
\renewcommand{\labelitemi}{\labelitemii}
\renewcommand{\labelitemiii}{\labelitemii}
  {\bf While} not at $T$:
    \begin{itemize}
    \item Move directly towards $T$.
    \item {\bf If} an obstacle is encountered:
      \begin{itemize}
      \item Explore the obstacle via clockwise circumnavigation.
      \item Move to some point $p_{min}$ on the obstacle closest
to $T$.
      \item {\bf If} Bob cannot move directly towards $T$ from
$p_{min}$:
        \begin{itemize}
        \item {\bf Return} 0; Target unreachable.

        \end{itemize}
      \end{itemize}
    \end{itemize}

  {\bf Return} 1; Target reached
}
}\\
\end{center}

BUG1 runs in time proportional to twice the entire length $l_b$ of the boundaries of (an $r$-neighborhood of) all obstacles (with an easy modification of the algorithm and slightly more careful analysis, the constants of this bound can be improved; see \cite{LumelskyStepanov}).  However, $l_b$ can be arbitrarily large, even when $l_{opt}$ is bounded. For example, consider the simple situation where $S$ and $T$ are close together, but separated by an obstacle with large perimeter (see Figure \ref{fig:BUG1}).  One advantage of BUG1 is that only a finite amount of memory is required: Bob must only remember the points $T$, $p_{min}$, and the first point encountered on the current obstacle.

\begin{figure}

\includegraphics[width=4in]{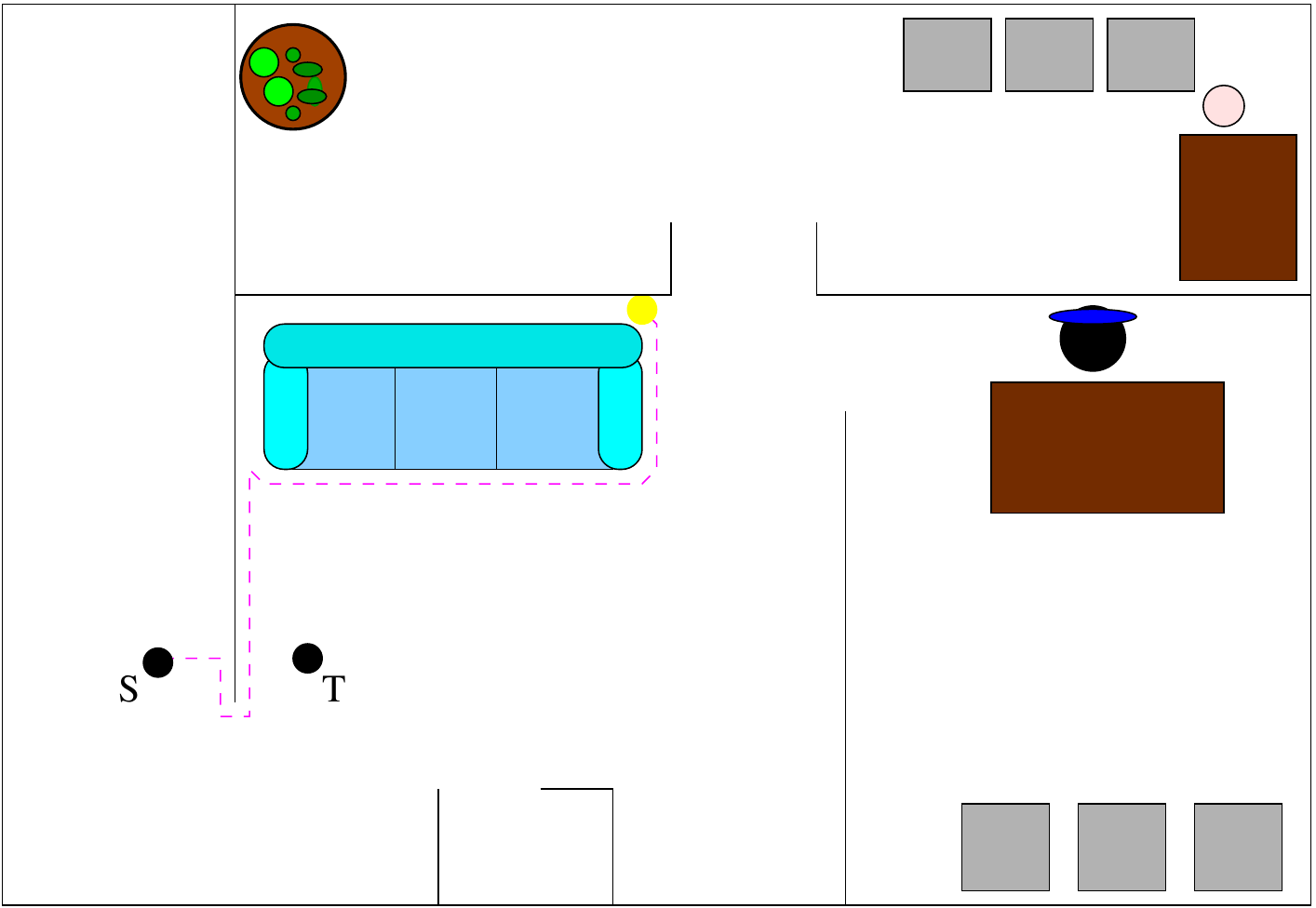}

\caption{In this instance of the $NAV_2$ problem, the BUG1 algorithm will take much longer to complete than the optimal offline solution.  Instances like this show that BUG1 is not $O(g)$-competitive for any $g$.}
\label{fig:BUG1}

\end{figure}

The CBUG algorithm solves the problem of unbounded competitiveness by introducing a virtual obstacle into the environment.  CBUG executes the BUG1 algorithm, but only within an ellipse with foci $S$ and $T$ and of fixed area $A_0$: Bob treats the ellipse as if it were an obstacle, even though it does not exist.  If BUG1 finds no solution within the given ellipse, CBUG repeats the algorithm in an ellipse of progressively larger area.  See Figure \ref{fig:CBUG}.

\begin{center}
\fbox{
\parbox{4.4in}{
{\center {\underline{\bf CBUG$(S,T,A_0)$}} \\}
\renewcommand{\labelitemi}{\labelitemii}
{\bf For} $i = 0$ to $\infty$:
  \begin{itemize}
  \item {\bf Execute} BUG1$(S,T)$ within ellipse with foci $S$ and $T$
  and area $2^iA_0$.
  \item {\bf If} Bob is at $T$:
    \begin{itemize}
    \item {\it Return} 1; Target reached.
    \end{itemize}
  \item {\bf If} Bob did not touch the ellipse while executing BUG1:
    \begin{itemize}
    \item {\it Return} 0; Target unreachable.
    \end{itemize}
  \end{itemize}
}
}\\
\end{center}

\begin{figure}
\includegraphics[width=4in]{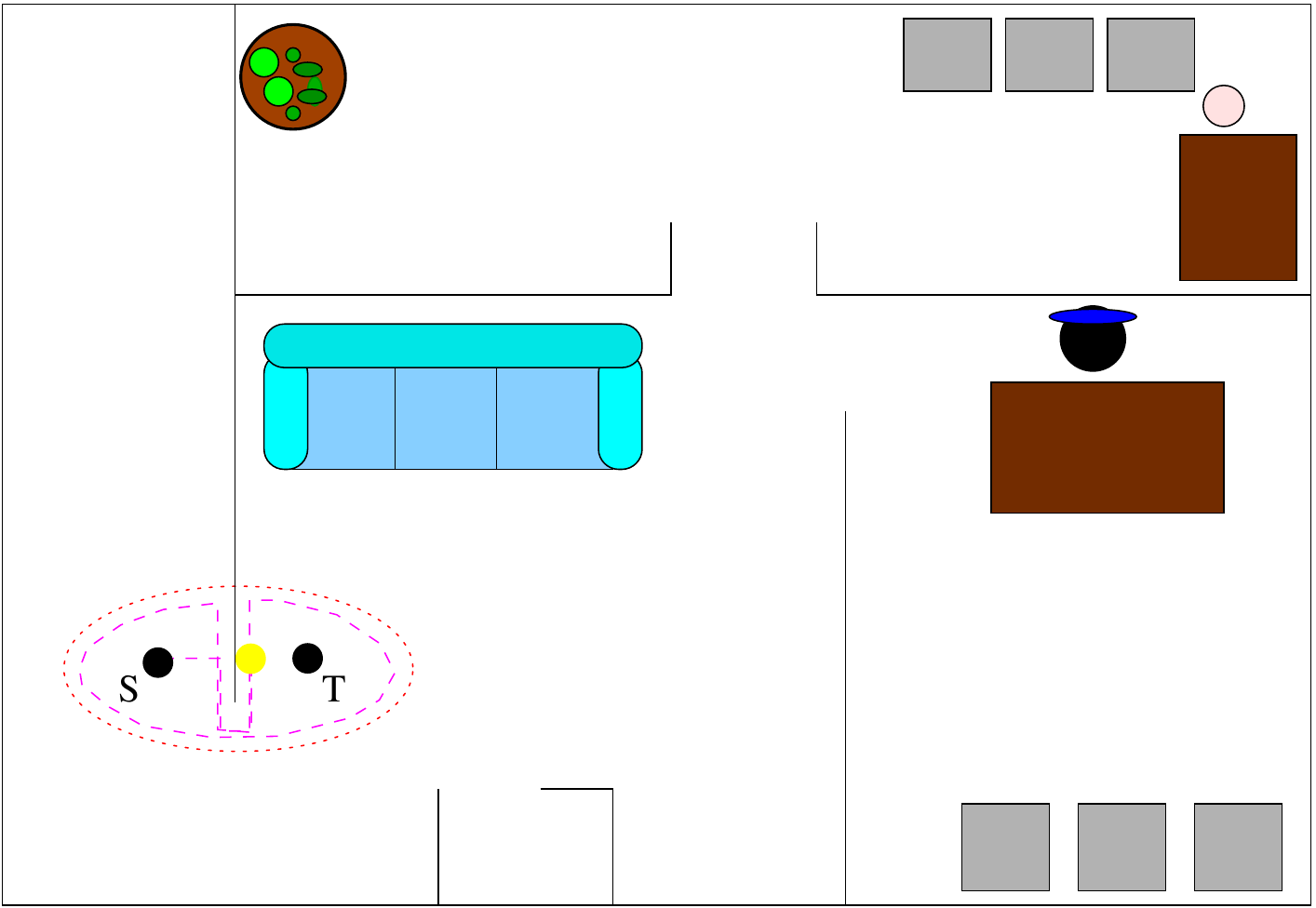}

\caption{The dashed line shows the path of a robot executing the BUG1 algorithm within a virtual bounding ellipse from the CBUG algorithm.  The ellipse prevents the robot from departing too long from the optimal path.}
\label{fig:CBUG}
\end{figure}

As the ellipses involved in CBUG are expanding in area, the virtual boundary must eventually contain either a path from $S$ to $T$ or a real obstacle cutting $T$ completely off from $S$.  In the former case, CBUG terminates at $T$.  In the latter case, Bob will not touch the virtual boundary, and again CBUG will terminate.  Note that CBUG, like BUG1, requires only constant memory: it only need remember the information $S$, $T$, $A_0$, $i$, and the point $p_{min}$ closest to $T$ on the current obstacle.  Usually, we will also have Bob remember the best path from the current point to $p_{min}$, still requiring only constant memory.

Gabriely and Rimon analyze the competitiveness of CBUG in the following two results:

\begin{theorem}\cite{GabrielyRimon}\label{thm:CBUGlower}
The $NAV_2$ problem has a quadratic universal lower bound, namely given by
    $$g_r(x) := \frac{4\pi}{6(1+\pi)^2r}x^2 \sim \frac{.122x^2}{r}.$$
\end{theorem}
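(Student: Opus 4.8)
The plan is to exhibit, for every target length $x = l_{opt}$ and every online algorithm $A$, an environment in which $A$ is forced to travel distance $\Omega(x^2/r)$. This is a lower bound over all algorithms, so I must construct the bad instance adversarially: as $A$ explores, I reveal obstacles that keep $A$ from finding a short route while keeping the true optimal path length equal to $x$. First I would set up the standard ``comb'' or ``barrier'' construction for $NAV_2$: place $S$ and $T$ a distance roughly $x$ apart (more precisely, so that the shortest obstacle-free path has length exactly $x$), and fill the strip between them with a family of thin walls, arranged so that any path must weave back and forth. The key parameter is that Bob has radius $r$, so two parallel walls a distance less than $2r$ apart form an impassable corridor; by spacing walls $\Theta(r)$ apart one packs $\Theta(x/r)$ of them into a region of diameter $\Theta(x)$, and a path that must traverse each wall's full length travels $\Theta(x)$ per wall, for a total of $\Theta(x^2/r)$.

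The second ingredient is the adversary argument that upgrades this from ``BUG1 is bad on a fixed instance'' to ``every algorithm is bad.'' Here I would invoke the fact that Bob has only tactile sensing: until Bob's body actually contacts a wall, Bob cannot distinguish between the environment in which that wall is present and the one in which it is absent. So I maintain a set of environments consistent with everything $A$ has sensed so far, and whenever $A$ is about to ``escape'' cheaply, I commit to an environment in that set which blocks the escape with a fresh wall. One must check that throughout this process the true optimal offline length stays $x$ — i.e. the fully-revealed environment still admits a path of length $x$ — which is arranged by leaving one designated winding corridor open in the final environment. Because CBUG's virtual ellipse of area $A_0$ is what caps its exploration, the extremal spaces here should essentially be the ones that fill such an ellipse; matching the constant $\tfrac{4\pi}{6(1+\pi)^2 r}$ comes from computing exactly how much ``wall length'' fits inside an ellipse of foci $S,T$ whose size is forced by requiring $l_{opt} = x$, together with the isoperimetric-type relation between that ellipse's area and $x$.

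The main obstacle I anticipate is pinning down the precise constant rather than just the $\Omega(x^2/r)$ rate. The rate is routine once the comb construction and the tactile-indistinguishability argument are in place, but to get $\tfrac{4\pi}{6(1+\pi)^2 r}$ one has to (i) determine the exact area of the smallest ellipse with foci $S,T$ that can contain the adversarial wall pattern while keeping $l_{opt}=x$ — this is where the $(1+\pi)^2$ factor presumably enters, from the ellipse's semi-axes being expressed via the path length and the factor-of-$\pi$ detour a BUG-type traversal around an obstacle incurs — and (ii) account for the factor $4\pi/6$ coming from area-versus-traversal-length bookkeeping (the $6$ looks like it tracks the geometric series $\sum 2^{-i}$ losses from CBUG's doubling schedule being reused in the lower-bound spaces, or equivalently a worst-case packing efficiency). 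I would handle this by first proving the clean $\Omega(x^2/r)$ statement, then doing the constant-chasing as a separate lemma, citing \cite{GabrielyRimon} for the extremal-space computation since the bound is quoted from there. A secondary subtlety is ensuring the construction degrades gracefully as $x \to \infty$ rather than only for large $x$, which the additive-constant slack in the $\Omega$ notation absorbs.
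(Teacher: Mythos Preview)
This theorem is not proved in the present paper: it is quoted from \cite{GabrielyRimon} and stated without proof, so there is no ``paper's own proof'' to compare your attempt against. What the paper \emph{does} prove is its own, sharper lower bound (Theorem~\ref{thm:universallb}), whose $n=2$ specialization gives leading coefficient $1/(4r)\approx 0.25/r$, tighter than the $\approx 0.122/r$ of Theorem~\ref{thm:CBUGlower}. Your general strategy --- parallel corridors with all but one blocked, the adversary leaving open whichever corridor the algorithm would visit last --- is exactly the construction the paper uses for Theorem~\ref{thm:universallb}; see the spaces $PC(l_0,\epsilon,r,2)$ in Section~\ref{sec:lb}, Example~\ref{ex:parallelcorridors}, and Figure~\ref{fig:parallelcorridors2D}. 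So for the $\Omega(x^2/r)$ rate you are aligned with the paper's own argument.

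Your constant-chasing paragraph, however, rests on a conceptual confusion. A universal lower bound must hold for \emph{every} online algorithm, so CBUG, its virtual ellipse, the initial area $A_0$, and the doubling schedule have no business appearing anywhere in the construction or analysis of the adversarial instance. Your guesses that the $(1+\pi)^2$ comes from ellipse semi-axes and that the $6$ comes from CBUG's geometric series are therefore aimed at the wrong object: those constants arise from whatever specific obstacle geometry Gabriely and Rimon chose in \cite{GabrielyRimon}, a construction this paper does not reproduce. If you need that exact constant, you must consult the original source; if you only need \emph{a} quadratic lower bound (and in fact a better one), follow the parallel-corridor argument of Section~\ref{sec:lb} instead.
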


In the next section we will provide a lower bound for competitiveness for $TASK$ for general $n$ (see Theorem \ref{thm:universallb}).  We note that for large $l_{opt}$, our lower bound for $NAV_2$ is tighter than the one in Theorem \ref{thm:CBUGlower}.

\begin{theorem}\cite{GabrielyRimon}\label{thm:CBUGupper}
If the target $T$ is reachable from $S$, CBUG solves $NAV_2$ in time proportional to the distance $l$, travelled by Bob, where
    $$l \leq \frac{6\pi}{2r}l_{opt}^2 + dist(S,T) + \frac{6A_0}{2r},$$
where $r$ is the robot's radius. Thus, CBUG is optimally competitive.
\end{theorem}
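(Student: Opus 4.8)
The plan is to derive the bound from two facts and combine them through the doubling of the ellipse areas: (a) a geometric estimate showing that the confocal ellipse used by CBUG need only reach area about $l_{opt}^{2}$ before it contains a valid path, and (b) a lemma bounding the length of a single BUG1 sweep inside a bounded region in terms of that region's area. Throughout, ``path'' means the path of Bob's center in configuration space, and $l_{opt}$ is the length of an optimal offline such path.

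First I would show the ellipse stays small. Fix an optimal offline path $\gamma$ from $S$ to $T$. For any $x$ on $\gamma$, cutting $\gamma$ at $x$ gives $dist(S,x)+dist(x,T)\le(\text{length of }\gamma\text{ up to }x)+(\text{length of }\gamma\text{ after }x)=l_{opt}$, so $\gamma$ lies inside the ellipse $E^{*}$ with foci $S,T$ and string length $l_{opt}$; this ellipse has semi-major axis $a^{*}=l_{opt}/2$ and area $\pi a^{*}\sqrt{(a^{*})^{2}-c^{2}}\le\pi(a^{*})^{2}=\tfrac{\pi}{4}l_{opt}^{2}$, where $2c=dist(S,T)$. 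Confocal ellipses are nested by area, and since the area of a confocal ellipse with semi-major axis $a$ is at most $\pi a^{2}$, any such ellipse of area at least $\tfrac{\pi}{4}l_{opt}^{2}$ has semi-major axis at least $l_{opt}/2$ and hence contains $E^{*}$, and therefore contains the valid path $\gamma$. Since BUG1 is complete inside a bounded region (as discussed for CBUG above), CBUG succeeds at the first index $i^{*}$ with $2^{i^{*}}A_{0}\ge\tfrac{\pi}{4}l_{opt}^{2}$; thus either $i^{*}=0$ or $2^{i^{*}-1}A_{0}<\tfrac{\pi}{4}l_{opt}^{2}$, so in all cases $2^{i^{*}}A_{0}\le\max\{A_{0},\ \tfrac{\pi}{2}l_{opt}^{2}\}$.

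Next comes the substantive lemma, which I would quote from \cite{LumelskyStepanov,GabrielyRimon} rather than reprove in detail: a single execution of BUG1 with a region $D$ of area $A$ as its only (virtual) obstacle travels at most $\tfrac{3A}{r}$, up to an additive term of order $dist(S,T)$ coming from the straight ``move toward $T$'' segments. The mechanism is a charging argument in configuration space: when Bob circumnavigates the $r$-inflation of a real obstacle whose boundary has length $P$, he walks that boundary once and returns along the shorter arc, spending at most $\tfrac{3}{2}P$, while the free-side collar of width $r$ along that boundary has area at least $rP$; for the obstacles Bob actually traces these collars are essentially disjoint -- two obstacles whose inflations come within $2r$ are impassable between and merge into a single obstacle, which is exactly the subtlety needing care -- so $\sum P$ is at most a constant times $A/r$, and the straight segments progress monotonically toward $T$ within one sweep. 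Granting this, CBUG performs sweeps $i=0,\dots,i^{*}$, the $i$-th costing at most $\tfrac{3\cdot 2^{i}A_{0}}{r}$ plus $O(dist(S,T))$; since the areas double, $\sum_{i=0}^{i^{*}}\tfrac{3\cdot 2^{i}A_{0}}{r}\le\tfrac{3\cdot 2^{i^{*}+1}A_{0}}{r}=\tfrac{6\cdot 2^{i^{*}}A_{0}}{r}\le\tfrac{6}{r}\cdot\tfrac{\pi}{2}l_{opt}^{2}=\tfrac{6\pi}{2r}l_{opt}^{2}$, with the leftover $\tfrac{6A_{0}}{2r}$ absorbing the $i^{*}=0$ case and the straight motions contributing the $dist(S,T)$ term; this is the claimed inequality. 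Finally, since $dist(S,T)\le l_{opt}$, we get $f_{(\mathrm{CBUG},\,l_{opt})}(x)\le\tfrac{6\pi}{2r}x^{2}+x+\tfrac{6A_{0}}{2r}\in O(x^{2})$, while Theorem \ref{thm:CBUGlower} says the quadratic $g_{r}$ is a universal lower bound on competitiveness for $NAV_{2}$; hence CBUG meets a universal lower bound and is optimally competitive.

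I expect the main obstacle to be the single-sweep lemma. Two points need real care: the charging argument must be set up so that the free-side collars of the obstacles actually traced remain (essentially) disjoint, which forces one to treat obstacles whose $r$-inflations nearly meet as a single obstacle; and one must verify that the additive $dist(S,T)$ terms, summed over the $O(\log l_{opt})$ sweeps, are still absorbed by the stated constants (or otherwise remain lower-order), so that the clean bound $\tfrac{6\pi}{2r}l_{opt}^{2}+dist(S,T)+\tfrac{6A_{0}}{2r}$ holds and not merely an $O(l_{opt}^{2})$ estimate. Both are handled in \cite{GabrielyRimon}, so the remaining work is to follow that accounting rather than to reinvent it.
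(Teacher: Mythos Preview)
The paper does not prove this theorem at all: it is quoted from \cite{GabrielyRimon} and stated without proof, and immediately afterward the authors explicitly question whether the proof in \cite{GabrielyRimon} is correct. So there is no ``paper's own proof'' to compare against; the relevant comparison is between your outline and the Gabriely--Rimon argument as described (and doubted) here.

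Your outline is exactly that argument: bound the terminal ellipse area by $\tfrac{\pi}{2}l_{opt}^{2}$ via the string-length characterization, invoke a lemma bounding one BUG1 sweep by a constant times $A/r$, and sum the geometric series. Your ellipse step and the doubling sum are fine. The substantive content is entirely in what you call the ``single-sweep lemma,'' and your sketch of it --- charge each circumnavigated boundary of length $P$ to a free-side collar of area $rP$, argue the collars are essentially disjoint, conclude $\sum P \lesssim A/r$ --- is precisely the step the present authors flag as unjustified. They point out that nothing in that picture rules out the traced boundary being, in effect, arbitrarily long relative to the area swept (their fractal and sine-curve examples), and that the implicit inequality $l\le A/(2r)$ used in \cite{GabrielyRimon} is not obviously true. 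In Section~\ref{sec:mathmotivation} they salvage a version of the lemma, but only for environments with finitely many obstacle points and only with the much worse constant $2048/\pi$ via Caraballo's theorem, not the constants needed for the stated bound.

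So the gap you yourself identify at the end is real and is not, as you suggest, merely a matter of ``following the accounting'' in \cite{GabrielyRimon}: the authors of this paper read that accounting and do not believe it closes. Your proposal is therefore a faithful reconstruction of the cited argument, with the same unresolved weak point; it does not supply anything the paper lacks.
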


Further improvements in terms of constants and average-case execution can be made by slightly modifying the algorithm or allowing nonconstant memory; see \cite{GabrielyRimon} for more details.

It is not clear to the authors that the proof of Theorem \ref{thm:CBUGupper} appearing in \cite{GabrielyRimon} (Lemmas 4.1-3 and Proposition 4.4) is correct.  In particular, in the proof of Lemma 4.2 of \cite{GabrielyRimon}, the length $l$ of the path that Bob's center traverses is implicitly related to the area $A$ swept out by Bob via the formula $l\leq A/(2r)$, where $r$ is Bob's radius.  It does not seem to be apparent that this identification holds without further argument.  For instance, the $r$-neighborhood of a fractal curve has finite area, even though a fractal curve has infinite length.  What prevents the path Bob's center follows from being arbitrarily long with respect to the area of its $r$-neighborhood?  In higher dimensions, we can construct spaces which force Bob's center to travel an arbitrarily long distance while covering only a bounded volume.  For example, fix some constant $k$ and take $X \subset \R^3$ to be the $r$-neighborhood of the curve $\gamma(t) = (t,\sin(kt),0)$, with $0 \leq t \leq 1$, with $S$ and $T$ on the curve at points $\gamma(0)$ and $\gamma(1)$, respectively.  Notice that for all $k$, $X$ is a subset of the box $B = [-r,1+r] \times [r-1,r+1] \times [-r,r]$.  Given the vertical restriction, Bob's center must have height 0.  Furthermore, the only points in $X$ with height $r$ are those directly above $\gamma$.  Thus Bob's center is forced to traverse the entirety of $\gamma$.  As $k \to \infty$, the length of $\gamma$ increases without bound, but the volume swept out by Bob is bounded above by the volume of $B$.  There could be such an example for dimension $n = 2$ as well.

In Section \ref{sec:mathmotivation}, we will give an argument justifying the use of some linear relationship between $l$ and $A/(2r)$ in dimension $2$ in some cases.  Our formula will be $l \leq cA/(2r)$ for some constant $c$ (with $c$ much bigger than $1$).

\section{Modifying $TASK_n$:  Clearance Parameter}\label{sec:modifying}

We now wish to analyze the $TASK_n$ problem for arbitrary $n$.  As mentioned in the Introduction, there can be no optimally competitive algorithm for the $TASK_n$ problem, which we will prove in the next section.  However, in the process, we will find bounds on competitiveness for a slightly weaker problem, defined here.

It seems that tight corridors are problematic for online robot navigators.  One way to remove that problem is to assume that the robot has some clearance parameter.

\subsection{Notation}

We begin by introducing convenient notation to be used throughout the remainder of the paper to discuss the notion of clearance.

\begin{definition}[$\rho$-neighborhood]
Let $Y \subseteq \R^n$ and let $\rho > 0$.  Then the \emph{$\rho$-neighborhood} of $Y$ is the union of all $\rho$-balls about points in $Y$:
    $$N_\rho(Y) = \{x \in \R^n : \text{ there exists } y \in Y \text{ such that } d(x,y) < \rho \}.$$
Here $d$ measures Euclidean distance.
\end{definition}

\begin{definition}[$\rho$-path]\label{def:path}
Let $p$ be a path, and let $\rho > 0$.  If the set $N_\rho(p)$ does not intersect an obstacle of $X$ then we call $p$ a \emph{$\rho$-path} in $X$.
\end{definition}

\begin{definition}[$\kappa$ and $r'$]
Let $r$ be Bob's radius and fix a constant $\epsilon \geq 0$.  Define
    $$\kappa = 2\sqrt{2r\epsilon + \epsilon^2}$$ and
    $$r' = r+\epsilon.$$
\end{definition}

Notice that if there are two points in $X$ of distance at most $\kappa$ apart and such that a sphere of radius $r'$ can occupy either point, then Bob can move along the straight line between them.

\subsection{Clearance and the Modified $TASK_n$ Problem}

\begin{definition}[Modified $TASK$]
Fix a constant $\epsilon > 0$, called the \emph{clearance parameter}.  We define the \emph{$\epsilon$-modified} versions of each task.  For convenience, we refer to the $\epsilon$-modified version of $TASK$ as \emph{modified} $TASK$.  The \emph{modified} $NAV_n$ and $SEARCH_n$ problems are to reach $T$ from $S$ if there is an $r'$-path, and if there is not then to either reach $T$ from $S$ along an $r$-path or determine that there is no $r'$-path.  The \emph{modified} $COVER_n$ problem is to traverse a path such that Bob's center comes within $r'$ of every point that is within $r$ of an $r'$-path from $S$ (that is, Bob comes close to all points Bob can touch without getting too close to an obstacle).
\end{definition}

Note that it is possible for an $r$-path solving $TASK_n$ to exist that is much shorter than every $r'$-path.  In fact, a slight modification of the examples from section \ref{sec:lb} shows that for any algorithm $A$ solving $NAV$ or $SEARCH$, and length $l$, there are spaces with $r'$-paths from $S$ to $T$ where the optimal path length is $l$ and $A$'s path length is an arbitrarily large multiple of $l$.  Thus, it is impossible to measure competitiveness with respect to the length of the optimal $r$-path, even for modified $NAV$ or modified $SEARCH$.  For this reason, we modify $l_{opt}$:

\begin{definition}[Modified $l_{opt}$]
For each instance of modified $NAV$ or modified $SEARCH$, we modify the length $l_{opt}$ used to compute competitiveness to be the optimal offline path length for a robot of radius $r'$ instead of radius $r$.
\end{definition}

From here on, we discuss competitiveness of algorithms solving modified $TASK$ with respect to this modified parameter.

\section{Universal Lower Bounds}\label{sec:lb}

For the modified $COVER$ problem, there is an obvious linear universal lower bound.  In this section we give much stronger explicit universal lower bounds for the modified $NAV$ and $SEARCH$ problems.  Note a universal lower bound for modified $NAV$ is automatically a universal lower bound for modified $SEARCH$, as $SEARCH$ is the same problem but with less information.  Our universal lower bound will be constructed via spaces where the extra knowledge of the exact location of $T$ does not help Bob, effectively transforming an instance of $NAV$ into an instance of $SEARCH$.  Moreover, given an algorithm which solves $SEARCH$, one may always turn an instance of $SEARCH$ into an instance of $COVER$, by moving the target $T$ to the last place Bob searches.  We exploit this fact to describe spaces such that the optimal offline runtime is proportional to the side length of an $n$-cube, while an online algorithm runs in time proportional to the $n$-volume of an $n$-cube.

Let $TASK_n$ be one of $NAV_n$ or $SEARCH_n$, and suppose an algorithm $A$ solves $TASK_n$.  Let $l_0$ and $\epsilon$ be given constants.  To construct a universal lower bound, we will create a space in which there is an $r'$-path from $S$ to $T$ of length $l_{opt}$ and the path prescribed by $A$ has length on the order of $\frac{l_{opt}^n}{r'\kappa^{n-2}}$.  In particular, if $n \geq 3$ then the length of path prescribed by $A$ goes to infinity as $l_{opt}$ is fixed and $\epsilon$ goes to 0, so no algorithm is competitive with respect to $l_{opt}$ without modifying $TASK_n$.

The spaces we construct will be `parallel corridor spaces',  $PC(l_0,\epsilon,r,n)$.  Each space will consist of a number of floors, and each floor will consist of several corridors of length $l_0$. These corridors will be squeezed as closely together as possible, overlapping to a great extent but not overlapping so much that Bob can move directly from one to another.  We will create an instance of $TASK_n$ by placing $S$ and $T$ at opposite ends of the several corridors, and then blocking all but one corridor, forcing Bob to explore every corridor to get from one side to the other (that is, essentially treat the instance as one of $COVER_n$).  For an example throughout this construction, see Figures \ref{fig:parallelcorridors2D} and \ref{fig:parallelcorridors3D}.

For $n = 2$, there is a similarity between our spaces and those of \cite{BlumRaghavanSchieber}, which essentially established a universal lower bound for (a different kind of) competitiveness for the $NAV_2$ task when all obstacles are polygonal.

\subsection{Constructing the Space}

To begin constructing $PC(l_0,\epsilon,r,n)$, we first construct a finite cubical lattice $L \subset \R^{n-2}$.  We restrict the coordinates to be elements of the closed interval $[0,l_0]$.  We will carefully choose a distance, $\lambda$, between adjacent points in $L$.  Roughly, $\lambda$ is chosen to be as small as possible while still being larger than $\kappa$.  Now we will define $\lambda$ more precisely.

Denote the number of values that fit into the interval $[0,l_0]$ and spaced at least distance $d$ apart by $np(d)$.   Then $np(d) = \lfloor l_0/d \rfloor + 1$.  If $l_0/\kappa \in \N$, then $np(\kappa) = l_0/\kappa + 1$.  In this case, choose $\lambda > \kappa$ so that $np(\lambda) = l_0/\kappa$.  If $l_0/\kappa \notin \N$ then choose $\lambda > \kappa$ so that $np(\lambda) = np(\kappa) = \lfloor l_0/\kappa \rfloor + 1 > l_0/\kappa$.

Extend $L$ to a subset of $\R^{n-1}$ by attaching an interval of length $l_0$ to each point in $L$: $L \times l_0I \subset \R^{n-1}$, where $I$ is the unit interval.  The $r'$-neighborhood of each of these lines is a \emph{corridor}.  Create a series of corridors in $\R^n$ by taking the $r'$ neighborhood of $L\times l_0I$:  set $L' := N_{r'}(L\times l_0I)$.  Figure \ref{fig:parallelcorridors3D} shows a picture of $L'$ (missing caps on the ends, and with extra black `flaps') in the case $n = 3$.

Notice that $L$ could have been chosen from a more dense packing (of the $(n-2)$-cube with side length $l_0$ by spheres of radius $\lambda$) to fit in more corridors into $L'$ and thus obtain better constants for the bound (see the Lattice Improvement, Section \ref{sec:SamplingImprovement}).  However, this does not affect the competitiveness class of our example.

The set $L'$ is already a collection of parallel corridors, but there are not enough of them.  Stack $h = \left\lfloor l_0/(2r')\right\rfloor$ copies of $L'$ on top of one another.  That is, place one copy of $L'$ at height $0$, one at height $2r'$, one at height $2(2r')$, etc., up to height $h(2r')$.  Think of each copy of $L'$ as a `floor' of a building with $h$ stories.

To be able to access any corridor from any other corridor, add a room at each end of the collection of all corridors so that a robot of radius $r'$ can pass between floors by passing through a room - i.e. both rooms have dimensions roughly $(2r' \times (l_0+2r')  \times\dots\times (l_0+2r') \times (l'+2r'))$, where $l' = 2r'h$.  Call one room the start room and the other the target room.

At the end of all but one of the corridors, place obstacles that prevent passage from the corridor to the target room.  The choice of which corridor to leave open depends on the algorithm, $A$, that we are building the space for.  If all of the corridors were blocked, $A$ would visit every corridor in some order before terminating.  Leave the last one unblocked.  We should carefully choose the size and location of the obstacles so that:
\begin{enumerate}
\item they block a robot from exiting,
\item they allow a robot to exit the chosen unblocked corridor, and
\item they are small enough that a robot in one corridor cannot feel an obstacle in an adjacent second corridor (and thus determine that it need not go down the second corridor).
\end{enumerate}
Say we wish to block a corridor $C$, whose axis of symmetry is $x \times l_0I$, where $x \in \R^{n-2}$.  Let $P'$ denote the set of points in $C$ which are as close as possible to but not in the target room.  Then $P'$ is a $(n-1)$-ball orthogonal to $x \times l_0I$ in $\R^n$.  Let $P \subset P'$ denote those points whose height (i.e. the value of the last coordinate, which is the coordinate that was increased in the stacking phase, corresponding to the `floor') differs from the height of $x$ by $g$ or more, where $g = \sqrt{(r+\epsilon)^2-((\kappa+\lambda)/4)^2}$.  Then $P$ consists of two connected components of distance $2g$ apart.

We claim $P$ satisfies the three desired properties, all of which follow from the choice that $\lambda > \kappa$.  For, the connected components of $P$ are distance $2g < 2\sqrt{(r+\epsilon)^2-(\kappa/2)^2} = 2\sqrt{(r+\epsilon)^2-((r+\epsilon)^2 - r^2)} = 2r$ apart, which blocks a robot of radius $r$ from passing, so (1) is satisfied.  Let $C_1$ and $C_2$ be intersecting corridors.  Their axes of symmetry are at the same height, $z$.  Notice that the difference, in absolute value, between $z$ and the height of a point in $C_1 \cap C_2$ is at most $\sqrt{(r+\epsilon)^2-(\lambda/2)^2} < g$.  Thus the unblocked corridor has had no obstacles placed in it, so (2) is satisfied.  Furthermore, the same calculation shows that a robot can only feel elements of height strictly less than $g$ in adjacent corridors.  Thus (3) is satisfied.

Adding the obstacles $P$ to all but one corridor, we have now finished constructing the space $PC(l_0,\epsilon,r,n)$.

We create an instance of $TASK_n$ by placing the start point $S$ in the center of the start room, and placing the target point $T$ in the center of the target room.

\begin{figure}

\input{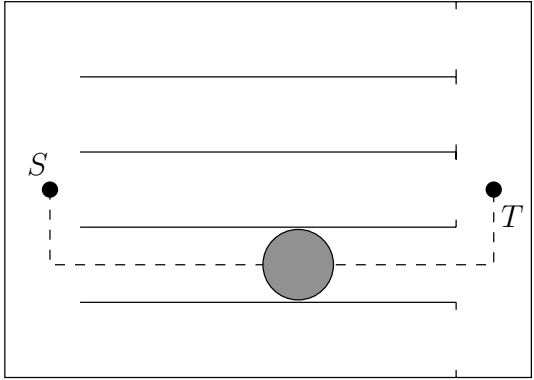}

\caption{The space $PC(8r',\epsilon,r,2)$.  The circle represents the robot, and the dotted line indicates the optimal length path from $S$ to $T$.}

\label{fig:parallelcorridors2D}

\end{figure}
\begin{figure}

\includegraphics{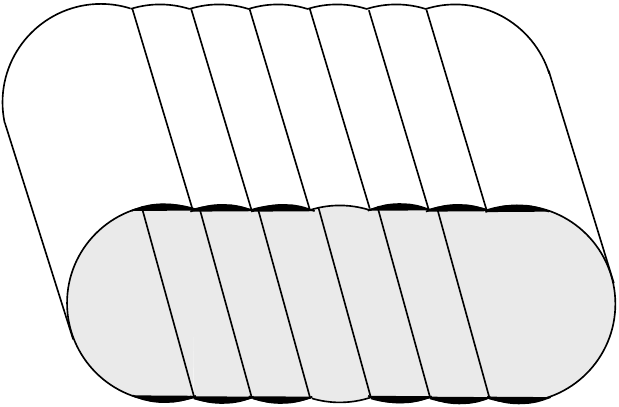}

\caption{The corridors within a set $L'$ when $n = 3$.  Notice that exactly one of them (the middle one) is unblocked; Bob cannot pass through the other corridors because of the black obstacle `flaps'.}

\label{fig:parallelcorridors3D}

\end{figure}

\begin{example}\label{ex:parallelcorridors}

Consider when $n = 2$.  In this case, $L$ is a subset of $\R^0$.  That is, $L$ is a point.  Since $L$ is a point, $L\times l_0I$ is a line segment of length $l_0$.  If we think of the line segment as sitting horizontally in the Euclidean  plane, then taking the $r'$-neighborhood of the line segment yields: two half-circles, one on the left (concave right) and one on the  right (concave left), connected by horizontal line segments of length $l_0$.  For the sake of this example, assume $l_0 = h(2r')$ for some integer $h$.  Then, taking $h$ copies of this corridor and stacking them on top of one another, we have a series of horizontal line segments spaced at distance $2r'$ from each other, with `caps' on the left and right ends.  To create the space $PC(l_0,\epsilon,r,2)$, on each end we replace the caps with a box of width $2r'$ and height $l_0$.   We declare the left-hand box to be the start box, putting $S$ in its center, and we declare the right-hand box to be the target box, putting $T$ in its center. Finally, we place obstacle `flaps' in all but one of the corridors on the far right-hand side.  See Figure \ref{fig:parallelcorridors2D}.

\end{example}

\begin{example}

If $n = 3$ then $L$ is a set of equally spaced points on a line segment of length $l_0$.  The resulting set $L'$ is as shown in Figure \ref{fig:parallelcorridors3D}, plus hemishperical `caps' on the end.  We say that two corridors are \emph{adjacent} if the distance between the corresponding points in $L$ is $\lambda$.  Notice that if $\lambda$ were allowed to be smaller than $\kappa$ then the pinching between adjacent corridors would be less and a robot of radius  $r$ could pass directly from one corridor to another without passing through the start room.

\end{example}

\subsection{Analysis of the Space}

We now begin to analyze this space.  First, we prove the following.

\begin{lemma}\label{lem:forcingbacktracks}
Fix a space $PC = PC(l_0,\epsilon,r,n)$.  To get between adjacent corridors in $PC$, Bob's center must pass through the start room.
\end{lemma}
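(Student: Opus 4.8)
The plan is to show that the only way to pass between two adjacent corridors is through one of the two end rooms, and then to rule out the target room, leaving only the start room. First I would recall the key geometric facts established during the construction: two corridors are adjacent when the corresponding lattice points in $L$ are at distance $\lambda$, and at any point in the intersection $C_1 \cap C_2$ of two adjacent corridors the height differs from the common axis height $z$ by at most $\sqrt{(r+\epsilon)^2 - (\lambda/2)^2}$. Since $\lambda > \kappa$, one computes $\sqrt{(r+\epsilon)^2 - (\lambda/2)^2} < \sqrt{(r+\epsilon)^2 - (\kappa/2)^2} = r$, so the overlap region of two adjacent corridors is too pinched for a ball of radius $r$ (let alone $r'$) to fit: Bob cannot occupy any point in $C_1 \cap C_2$. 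Hence Bob cannot cross directly from one corridor into an adjacent one along the shared overlap.

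Next I would argue that the corridors are otherwise disjoint except through the rooms. By construction, $L' = N_{r'}(L \times l_0 I)$ is the $r'$-neighborhood of a union of parallel segments, and two such corridors $C_1, C_2$ intersect only if their axes are within $2r'$ of each other, i.e. only if the lattice points are within $2r'$; given the spacing $\lambda$, the only corridors that intersect at all are the adjacent ones (and by the previous paragraph Bob cannot use those intersections). Distinct floors are stacked at height gaps of $2r'$, so corridors on different floors do not overlap in a way Bob can exploit either. Therefore any path of Bob's center from a point in one corridor to a point in an adjacent corridor must leave the union of corridors, and the only parts of $PC$ not contained in the corridors are the start room and the target room. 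So Bob must pass through one of these two rooms.

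Finally I would eliminate the target room. The target room is separated from all but one corridor by the obstacles $P$: by property (1) of the construction, the two connected components of $P$ are at distance $2g < 2r$, which blocks a robot of radius $r$ from passing between that corridor and the target room. So the target room connects (for Bob) to exactly one corridor, the unblocked one; Bob cannot travel from any corridor into the target room and back out into a different (adjacent) corridor. Consequently the only room through which Bob can transit between two adjacent corridors is the start room, which is the claim. The main obstacle here is simply being careful that "adjacent" is exactly the case where the overlap exists but is unusable — once the inequality $\sqrt{(r+\epsilon)^2-(\lambda/2)^2} < r$ is in hand (which is immediate from $\lambda > \kappa$ and the definition $\kappa = 2\sqrt{2r\epsilon+\epsilon^2}$, giving $(\kappa/2)^2 = (r+\epsilon)^2 - r^2$), the topology of the corridor union does the rest.
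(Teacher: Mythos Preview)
Your argument for the core step has a gap. You correctly bound the height of every point of $C_1 \cap C_2$ by $\sqrt{r'^2-(\lambda/2)^2}<r$, but from this you conclude that Bob cannot occupy any point of $C_1 \cap C_2$ and hence cannot cross. This does not follow: Bob need not fit inside $C_1 \cap C_2$; Bob only needs to fit inside $C_1 \cup C_2$, and the thinness of the overlap by itself says nothing about whether Bob's body can straddle $C_1\setminus C_2$ and $C_2\setminus C_1$ simultaneously. Indeed, when $\lambda<r'$ (which happens whenever $\epsilon$ is small enough relative to $r$) the axis $l_1$ itself lies in $C_1 \cap C_2$, and Bob can certainly be centered there, so even the weaker reading ``Bob's center cannot lie in $C_1\cap C_2$'' is false.

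The paper closes this gap by singling out the equidistant locus. Any center-path from $C_1$ to $C_2$ inside $C_1\cup C_2$ must at some moment pass through a point $p$ with $d(p,l_1)=d(p,l_2)$; moving distance $r$ from $p$ in a direction orthogonal to both axes and to the segment joining them produces a point at distance at least $\sqrt{(\lambda/2)^2+r^2}>\sqrt{(\kappa/2)^2+r^2}=r'$ from \emph{each} axis, hence outside $C_1\cup C_2$. Equivalently: on the equidistant hyperplane one has $C_1\cup C_2 = C_1\cap C_2$, which is exactly the observation needed to make your height bound bite. Your computation is the right one, but it must be applied at the equidistant point, not to the overlap globally.

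Your second and third paragraphs go beyond the paper's proof, which argues only inside $C_1\cup C_2$ and leaves exclusion of the target room to the earlier construction of the blocking flaps. Be cautious with the assertion that only adjacent corridors intersect: for small $\epsilon$ one has $\lambda<r'$ and hence corridors at lattice distance $2\lambda$ can overlap as well. Fortunately the lemma concerns only adjacent corridors, so that side claim is not needed.
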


\begin{proof}
Let $C_1$ and $C_2$ be two adjacent corridors in $PC$, with center lines $l_1$ and $l_2$.
For convenience, we will assume that $l_1$ and $l_2$ are axis parallel, and 0 on all but the first 2 coordinates.  Let $CS$ be the set of points in $C_1 \cup C_2$ where Bob's center can be.  Let $CS_i = C_i \cap CS$, where $i=1,2$.  We want to show that there is no path from $CS_1$ to $CS_2$ in $CS$.  Suppose there were.  Then since $CS_i$ is connected, and there is a path from $l_1$ to $l_2$ through $CS_1 \cup CS_2$.  In particular, Bob can be on a point, $p$, equidistant from $l_1$ and $l_2$.  WLOG, the third coordinate of $p$ is nonnegative.  But then consider the point $p_r$ you get by adding $r$ to the third coordinate of $p$.  Its distance from $l_1$ is at least $\sqrt{(\kappa/2)^2 + r^2} > r + \epsilon$, so $p' \notin CS_1$.  Similarly, $p' \notin CS_2$.  But this contradicts that Bob can occupy the point $p$.
\end{proof}

We are now ready to prove our universal lower bound.  We have been careful to keep track of the effect that $\epsilon$ has on the complexity of the problem.

\begin{theorem}\label{thm:universallb}
The modified $NAV_n$ task and the modified $SEARCH_n$ task both have an asymptotic universal lower bound on competitiveness given by
    $$l_{opt}^n.$$
Moreover, for large enough $l_{opt}$ we have that for every navigation algorithm $A$, there is a space $X$ with an $r'$-path from $S$ to $T$ of length at most $l_{opt}$ and
    $$l_A(X) \geq c_n\frac{l_{opt}^n}{\kappa^{n-2}r'},$$
where $c_n$ is a constant depending only on $n$.
\end{theorem}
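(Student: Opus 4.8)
The plan is to make the informal construction quantitative by counting corridors and lower-bounding the distance Bob's center travels. First I would count the number of corridors in a single floor $L'$. Since $L$ is a cubical lattice in $[0,l_0]^{n-2}$ with spacing $\lambda$, it has $np(\lambda)^{n-2}$ points, and by the choice of $\lambda$ we have $np(\lambda) \geq l_0/\kappa$ (with equality essentially when $l_0/\kappa \in \N$, and strictly larger otherwise), so the number of corridors per floor is at least $(l_0/\kappa)^{n-2}$ up to a constant. Multiplying by the number of floors $h = \lfloor l_0/(2r')\rfloor \geq l_0/(2r') - 1$ gives a total corridor count on the order of $\frac{l_0^{n-1}}{\kappa^{n-2} r'}$.

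Next I would use Lemma \ref{lem:forcingbacktracks}: to pass between adjacent corridors, Bob's center must route through the start room. Since all but the (algorithm-dependent) last corridor is blocked at the target end, and since by property (3) Bob cannot detect the blocking obstacle in a corridor without physically traversing (most of) that corridor, any algorithm $A$ that solves modified $NAV_n$ (or $SEARCH_n$) on this space is forced to enter essentially every corridor, traverse its length $\approx l_0$, discover it is blocked (or in the unblocked case, exit), and — to move to the next corridor — return through the start room, incurring another $\approx l_0$ of travel. Here I would invoke the standard adversary argument: because the open corridor is chosen to be the last one $A$ visits, $A$ must visit at least (number of corridors) $-1$ blocked corridors first. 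Each such visit contributes travel length at least proportional to $l_0$ (down and back, via the start room), so $l_A(X) \geq c'\, l_0 \cdot (\text{number of corridors}) \geq c_n \frac{l_0^n}{\kappa^{n-2} r'}$ for a dimensional constant $c_n$.

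Then I would bound $l_{opt}$ from above. Offline, a robot of radius $r'$ knows which corridor is open, so it goes from $S$ straight to that corridor, down its length $l_0$, and into the target room: its path has length at most $l_0 + O(r'h) = l_0 + O(l_0) $, so $l_{opt} \leq C l_0$ for an absolute constant $C$ (one must check the rooms have the right dimensions, $\approx 2r' \times (l_0+2r')^{n-2} \times (l'+2r')$ with $l' = 2r'h \leq l_0$, so crossing a room costs $O(l_0)$). Substituting $l_0 \geq l_{opt}/C$ into the lower bound on $l_A$ yields $l_A(X) \geq c_n \frac{l_{opt}^n}{\kappa^{n-2} r'}$ after absorbing constants, which is the claimed inequality; and since $\kappa$ and $r'$ are fixed once $\epsilon, r$ are fixed, this is $\Omega(l_{opt}^n)$, giving the asymptotic universal lower bound. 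The "for large enough $l_{opt}$" caveat is exactly what lets me ignore the additive $-1$ in the corridor count and the additive $O(l_0)$ slack, and lets me choose $l_0$ so that $h \geq 1$ and the lattice is nonempty.

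The main obstacle I expect is making the adversary/lower-bound step fully rigorous: I need to argue carefully that $A$ really must traverse (almost) the full length of each blocked corridor before it can "know" it is blocked, using property (3) — that obstacles in adjacent corridors are undetectable — together with the fact that $A$'s only source of information is its tactile sensor and GPS. In particular I must rule out $A$ learning the open corridor's identity by some clever partial exploration, and I must handle the bookkeeping that the transition between any two corridors (adjacent or not) forces a pass through the start room, so the per-corridor cost is genuinely $\Theta(l_0)$ and the total is the sum over all but one corridor. The counting of lattice points and the verification of the room dimensions are routine; the care is all in the "forced exploration" argument and in tracking the dimensional constant $c_n$.
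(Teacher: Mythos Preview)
Your proposal is correct and follows essentially the same route as the paper: count corridors as $(l_0/\kappa)^{n-2}\lfloor l_0/(2r')\rfloor$, run the adversary argument that leaves the last-visited corridor unblocked so each blocked corridor costs $\Theta(l_0)$ of travel (down and back via the start room), and bound $l_{opt}$ above by a dimensional constant times $l_0$ (the paper gets $l_{opt}\le (1+\sqrt{n-1})l_0+2r'$). Your flagged ``main obstacle'' --- making the forced-exploration step rigorous via property~(3) and Lemma~\ref{lem:forcingbacktracks} --- is exactly where the paper is terse, so your instinct to spell that out is well placed.
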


Together with the additional mild assumption that $\epsilon < r$, `large enough' depends solely on $n$ and $r$.

When $n \geq 3$, by allowing $\epsilon$ to go to $0$ we obtain greater and greater lower bounds on online path length independent of $l_{opt}$.  Thus,

\begin{corollary}\label{cor:needmod}
If $n \geq 3$ then every algorithm that solves the (unmodified) $NAV_n$ problem or the (unmodified) $SEARCH_n$ problem is not $O(f)$-competitive for any $f : \R \to \R$.
\end{corollary}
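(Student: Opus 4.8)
\emph{Proof proposal for Corollary~\ref{cor:needmod}.}

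The plan is to obtain Corollary~\ref{cor:needmod} as a limiting case of Theorem~\ref{thm:universallb}, letting the clearance parameter $\epsilon$ tend to $0$. The first thing I would check is that the construction behind Theorem~\ref{thm:universallb} applies not only to algorithms solving the \emph{modified} tasks, but to any algorithm $A$ solving the (unmodified) $NAV_n$ or $SEARCH_n$ problem. The point is that the adversarial argument behind Theorem~\ref{thm:universallb} only ever uses that $A$ must move Bob from $S$ to $T$ in the parallel corridor space $PC(l_0,\epsilon,r,n)$. In that space the single open corridor provides an $r'$-path from $S$ to $T$, and since $r<r'$ the $r$-neighborhood of any set is contained in its $r'$-neighborhood, so any $r'$-path is in particular an $r$-path. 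Hence an unmodified $NAV_n$ (or $SEARCH_n$) solver is indeed required to reach $T$ there, and the estimate $l_A(X)\ge c_n\, l_{opt}^{\,n}/(\kappa^{n-2}r')$ of Theorem~\ref{thm:universallb} holds for such $A$. For the same reason the optimal offline length for a radius-$r$ robot — the parameter $l_{opt}$ relevant to the unmodified problem — is no larger than the length of the $r'$-path used in the construction. Concretely, I would record the following: for every prescribed value $x$ that is large enough in terms of $n$ and $r$, and every $\epsilon\in(0,r)$, there is a space $X_\epsilon$ with $l_{opt}(X_\epsilon)\le x$ and $l_A(X_\epsilon)\ge c_n\, x^n/(\kappa^{n-2}r')$, where $\kappa=2\sqrt{2r\epsilon+\epsilon^2}$ and $r'=r+\epsilon$.

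Next I would fix such an algorithm $A$ and such an $x$, and push $\epsilon\to 0^+$ through this family. Then $\kappa\to 0$ and $r'\to r$; because $n\ge 3$ forces $n-2\ge 1$, the denominator $\kappa^{n-2}r'$ tends to $0$, so $l_A(X_\epsilon)\to\infty$ while $l_{opt}(X_\epsilon)\le x$ stays bounded. Therefore $f_{(A,l_{opt})}(x)=\sup\{l_A(I):l_{opt}(I)\le x\}\ge\sup_{\epsilon\in(0,r)}l_A(X_\epsilon)=\infty$, and this holds for all sufficiently large $x$.

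Finally I would conclude: if $f_{(A,l_{opt})}\in O(f)$ for some $f:\R\to\R$, there would be constants $C$ and $x_1$ with $f_{(A,l_{opt})}(x)\le C|f(x)|<\infty$ for all $x\ge x_1$, contradicting $f_{(A,l_{opt})}(x)=\infty$ for large $x$. Hence $A$ is $O(f)$-competitive with respect to $l_{opt}$ for no $f$, and since $A$ was an arbitrary algorithm solving (unmodified) $NAV_n$ or $SEARCH_n$, the corollary follows. I expect the only genuine work to be the first step — confirming that the lower-bound spaces of Theorem~\ref{thm:universallb} behave identically under the unmodified tasks and under the radius-$r$ rather than radius-$r'$ version of $l_{opt}$ — which, as indicated, reduces to the containment of $r$-neighborhoods in $r'$-neighborhoods together with the observation that the adversary construction never uses anything about $A$ beyond its having to reach $T$.
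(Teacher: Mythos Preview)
Your proposal is correct and follows exactly the approach the paper indicates: the paper derives the corollary from Theorem~\ref{thm:universallb} via the single remark that for $n\ge 3$, letting $\epsilon\to 0$ drives the lower bound $c_n\,l_{opt}^n/(\kappa^{n-2}r')$ to infinity while $l_{opt}$ stays fixed. You have simply made explicit the details the paper leaves tacit---that the parallel-corridor instances are legitimate instances of the unmodified tasks (the open corridor is an $r$-path since $r<r'$), that the radius-$r$ optimal length is bounded by the radius-$r'$ one, and that $f_{(A,l_{opt})}(x)=\infty$ precludes membership in $O(f)$ for any real-valued $f$.
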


\begin{proof}[Proof of Theorem \ref{thm:universallb}]

To obtain our lower bound, we examine optimal path length on parallel corridor spaces $PC(l_0,\epsilon,r,n)$.

In such a space, one way to get from $S$ to $T$ is to travel from $S$ directly to the unobstructed corridor, travel along the unobstructed corridor to the target room, then travel directly to $T$. The length $l_{opt}$ is therefore at most the length of a corridor ($l_0$) plus the maximal distance from $S$ to a corridor (which is at most
$\sqrt{n-1}l_0/2+r'$) and from a corridor to $T$ (which is also at most $\sqrt{n-1}l_0/2+r'$).  Thus,
    $$l_{opt} \leq (1+\sqrt{n-1})l_0 + 2r',$$
and so
    \begin{equation}\label{eq:l0lopt}
    l_0 \geq \frac{l_{opt} - 2r'}{1 + \sqrt{n-1}}.
    \end{equation}

The number of corridors per floor in $PC(l_0,\epsilon,r,n)$ is at least
    $$\left(\frac{l_0}{\kappa}\right)^{n-2},$$
so the total number of corridors is at least
    $$\left(\frac{l_0}{\kappa}\right)^{n-2}\left\lfloor \frac{l_0}{2r'} \right\rfloor.$$

Let $A$ be an algorithm solving $TASK_n$.  If all of the corridors in our space were blocked, $A$ would have Bob visit each corridor in some order.  Let the last-visited corridor be the one unblocked.  Then Bob travels down each blocked corridor at least twice, each time a distance of at least $(l_0 - r')$.  Bob travels the unblocked corridor at least once.  Thus, any online algorithm in $PC(l_0,\epsilon,r,n)$ will have path length at least
    \begin{equation}\label{eq:totalLength}
    2\left(\left(\frac{l_0}{\kappa}\right)^{n-2}\left\lfloor \frac{l_0}{2r'} \right\rfloor - 1\right)(l_0 - r') + l_0.
    \end{equation}

    Combining (\ref{eq:l0lopt}) and (\ref{eq:totalLength}), we have the desired result.
\end{proof}

We note the particular case $n = 2$.  Letting $\epsilon \to 0$, we see that this distance is quadratic in $l_{opt}$ with a leading coefficient of $1/(4r) = .25/r$.  Thus for large $l_{opt}$, our example forces paths more than twice as long as those forced by the examples of \cite{GabrielyRimon} (see Theorem \ref{thm:CBUGupper}).

\section{The Algorithms}\label{sec:Algorithms}

In this section, we present the CBoxes algorithm for solving the COVER problem and the Boxes algorithm for solving the $NAV$ and $SEARCH$ problems.  Both algorithms have the same structure, and rely on only a few main ingredients.  The first ingredient is to subdivide the space into a cubical lattice, discretizing the problem.  We break the space up into cubes, or `boxes', so that two points in a given cube are at most $\epsilon$ apart.  The second ingredient, for Boxes, is to restrict movement to an ellipsoid, and progressively increase the volume of the ellipsoid.  This is an important ingredient for obtaining upper bounds on complexity.  Our virtual bounding ellipsoid is a direct generalization of the virtual ellipses of Gabriely and Rimon \cite{GabrielyRimon}.  The final main ingredient is to explore unobstructed cubes by performing a depth-first search of the space\footnote{To be precise, a depth-first search of a dynamically generated spanning tree of the 1-skeleton of the dual of the cubical lattice.}.  We will analyze these algorithms in the next section.

\subsection{Colors}

To begin, we introduce some terminology to make visualization of the algorithm's execution easier and to formalize some aspects of the algorithm.  When our algorithms are being run, there are a few types of cubes that are encountered.  We describe and associate a color to each type of cube:
    \begin{itemize}
    \item White:  Unexplored;
    \item Yellow:  Bob's center can be at the center of the cube;
    \item Red:  Too close to an obstacle:  the center of a robot of radius $r+\epsilon$ cannot be anywhere in the cube;
    \item Pink:  outside of the virtual boundary.
    \end{itemize}
As our algorithms run, they change White cubes into Yellow, Red, or Pink cubes, and (when increasing the size of the virtual boundary) Pink cubes back to White cubes.

\subsection{The CBoxes Algorithm}

We are ready to define the CBoxes algorithm and its companion algorithm, CGraphTraverse.  The CBoxes algorithm is our solution to the COVER problem (hence the `C').  The CGraphTraverse companion algorithm implements the depth-first search described above.

\begin{center}
\fbox{
 \hspace{-.4in}
 \parbox{4.4in}{
    {\center {\underline{\bf CBoxes$_{\epsilon}$}} \\}
    \renewcommand{\labelitemi}{\labelitemii}
    \renewcommand{\labelitemiii}{\labelitemii}
    \begin{itemize}
        \item Break $X$ into a grid of axis-parallel cubes (`boxes') with side length
            $l = \min\{\epsilon/2,\epsilon/\sqrt{n}\}$.  All cubes begin colored White.
        \item Travel in a straight line from $S$ to the center of the current cube, $C$.  If obstacle is encountered, stop (no $r'$-paths exist).
        \item Color $C$ Yellow.
        \item Explore $X$ using CGraphTraverse($C$).
    \end{itemize}
}
}\\
\end{center}

\begin{center}
\fbox{
 \hspace{-.4in}
 \parbox{4.4in}{
    {\center {\underline{\bf CGraphTraverse($C$)}} \\}
    \renewcommand{\labelitemi}{\labelitemii}
    \renewcommand{\labelitemiii}{\labelitemii}
    \begin{itemize}
        \item Let $Adjacent$ be the set of cubes sharing an $n-1$ dimensional face with $C$.
        \item While there are White cubes in $Adjacent$,
        \begin{itemize}
            \item Pick White $D \in Adjacent$.
            \item Move in a straight line toward the center of $D$.
            \item If we encounter an obstacle in the process
            \begin{itemize}
                \item Color $D$ Red.
            \end{itemize}
            \item Else
            \begin{itemize}
                \item Color $D$ Yellow.
                \item CGraphTraverse($D$).
            \end{itemize}
                \item Travel back to the center of $C$.
        \end{itemize}
    \end{itemize}
}}\\
\end{center}

\subsection{The Boxes Algorithm}

The Boxes algorithm solves both the $SEARCH$ and $NAV$ problems, with only a minor modification between the two (for the $SEARCH$ problem, we use a virtual sphere instead of a virtual ellipsoid).  The basic algorithm does not rely on the location of $T$.  However, we will offer improvements in Section \ref{sec:improvements} which will potentially greatly improve the average-case runtime of Boxes when solving $NAV$ problems.  The improvements will come largely from the choice of the cube $D$ in the GraphTraverse algorithm.  The similarities between Boxes and CBoxes will be apparent.  As with CBoxes, the Boxes algorithm has a companion algorithm, GraphTraverse, which implements the depth-first-search portion of the algorithm.

If at any time the algorithm stops without reaching $T$, then $T$ is unreachable -- there is no $r'$-path from $S$ to $T$.

\begin{center}
\fbox{
 \hspace{-.4in}
 \parbox{4.4in}{
    {\center {\underline{\bf Boxes$_{\epsilon}$}} \\}
    \renewcommand{\labelitemi}{\labelitemii}
    \renewcommand{\labelitemiii}{\labelitemii}
    \begin{itemize}
        \item Break $X$ into a grid of axis-parallel cubes (`boxes') with side length
            $l = \min\{\epsilon/2,\epsilon/\sqrt{n}\}$.  All cubes begin colored White.
        \item Travel in a straight line from $S$ to the center of the current cube, $C$.  If an obstacle is encountered, stop.
        \item Define $T'$ to be $T$ if solving modified $NAV_n$, or $S$ if solving modified $SEARCH_n$.  Define $a_0 = d(S,T')+l$, and set $a = a_0$.
        \item While not in the same cube as $T$
            \begin{itemize}
                \item Define $\mathcal{E}$ to be the solid ellipsoid defined by $\{p : d(S,p) + d(p,T') \leq a\}$.
                \item Color Pink all cubes that are completely outside of $\mathcal{E}$.
                \item Explore $X$ using GraphTraverse($C$,$T$).
                \item If there is no cube adjacent to a Pink cube which is explored while executing GraphTraverse, stop.
                \item If $S$ is surrounded by points within Red cubes, stop.
                \item Set $a = a\times 2$.
                \item Color all Pink cubes White.
            \end{itemize}
        \item Travel in a straight line to $T$.  If an obstacle is encountered, stop.
    \end{itemize}
}
}\\
\end{center}

\begin{center}
\fbox{
 \hspace{-.4in}
 \parbox{4.4in}{
    {\center {\underline{\bf GraphTraverse($C$,$T$)}} \\}
    \renewcommand{\labelitemi}{\labelitemii}
    \renewcommand{\labelitemiii}{\labelitemii}
    \renewcommand{\labelitemiv}{\labelitemii}
    \begin{itemize}
        \item If $T \in C$ Return.
        \item Let $Adjacent$ be the set of cubes sharing an $(n-1)$-dimensional face with $C$.
        \item While there are White cubes in $Adjacent$,
        \begin{itemize}
            \item Pick White $D \in Adjacent$.
            \item Move in a straight line toward the center of $D$.
            \item If we encounter an obstacle in the process
            \begin{itemize}
                \item The obstacle cannot be virtual, so color $D$ Red.
            \end{itemize}
            \item Else
            \begin{itemize}
                \item Color $D$ Yellow.
                \item GraphTraverse($D$,$T$).
                \item If $T$ is in the current cube, Return.
            \end{itemize}
            \item Travel back to the center of $C$.
        \end{itemize}
        \item Return
    \end{itemize}
}}\\
\end{center}

\section{Analysis of CBoxes}\label{sec:CBoxes}

In this section, we prove that CBoxes works, and analyze its competitiveness.  To do so, observe that our algorithm performs a depth-first search of a particular graph $G_0$, as follows.  The space $X$ is broken into a grid of axis-parallel cubes with side length $l = \min\{\epsilon/2,\epsilon/\sqrt{n}\}$.  Let $G$ be the graph whose vertex set is the set of centers of cubes, where there is a straight edge between two centers if they share an $(n-1)$-dimensional face.  Let $\mathcal{O}$ denote the set of edges of $G$ for which there does not exist an $r'$-path between the centers of the corresponding cubes.  Let $C_0$ denote the cube containing Bob's initial position $S$.  Then $G_0$ is the connected component of $G \setminus \mathcal{O}$ containing the center of $C_0$.

\begin{theorem}\label{thm:CBoxesWorks}
The CBoxes algorithm solves the modified $COVER_n$ problem by effectively performing a depth-first search of $G_0$ along a subtree of $G$.
\end{theorem}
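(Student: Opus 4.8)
The plan is to observe that CBoxes is, on its face, a recursive depth-first search of the cube-adjacency graph $G$, with the White/Yellow/Red colours doing the usual bookkeeping, and then to establish two things: which cubes get coloured Yellow, and that visiting the centres of those cubes suffices to solve modified $COVER_n$. Write $c(C)$ for the centre of a cube $C$. Since CGraphTraverse colours a cube non-White before it ever recurses into it, each cube is the argument of at most one recursive call; hence the recursion tree $\mathcal{T}$ is a genuine tree, its edges are $(n-1)$-face-adjacencies (edges of $G$), and Bob's trajectory traverses each edge of $\mathcal{T}$ exactly twice (down into a child, back up to its parent), together with one out-and-back probe along a partial segment for every Red neighbour. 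Because $X$ has finite diameter it meets only finitely many grid cubes, so the recursion is finite and CBoxes halts; this already supplies the ``depth-first search along a subtree of $G$'' half of the statement.

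The geometric heart is a small clearance lemma: if $C$ and $D$ are $(n-1)$-face-adjacent cubes and each of $C,D$ contains (in its closure) a point that a radius-$r'$ ball can occupy, then Bob can move in a straight line from $c(C)$ to $c(D)$ without meeting an obstacle. Indeed, by the choice $l=\min\{\epsilon/2,\epsilon/\sqrt n\}$ the circumradius $\tfrac{l\sqrt n}{2}$ of a grid cube is at most $\epsilon/2$, so the centre of such a cube has obstacle-clearance at least $r'-\epsilon/2=r+\epsilon/2$; and for two points of clearance $r+\epsilon/2$ at distance $l\le\epsilon/2$ apart, the elementary two-ball estimate shows every point of the joining segment has clearance at least $\sqrt{(r+\epsilon/2)^2-(\epsilon/4)^2}>r$, using $r\epsilon>0$. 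This is a mild strengthening of the remark about $\kappa$ in Section~\ref{sec:modifying}.

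Now let $\widetilde G$ be the subgraph of $G$ whose edges are the face-adjacencies for which the straight move between the two centres is obstacle-free, and let $\widetilde G_0$ be the component of $\widetilde G$ containing $c(C_0)$. An induction on the recursion identifies the Yellow cubes at termination as exactly those whose centre lies in $\widetilde G_0$. By the clearance lemma, every edge of $G_0$ (its endpoints are joined by an $r'$-path, hence are $r'$-occupiable) lies in $\widetilde G$, so $G_0\subseteq\widetilde G_0$ and the search reaches every vertex of $G_0$. Moreover, for any $r'$-path $\gamma$ from $S$, walking along $\gamma$ yields a sequence of cubes each meeting $\gamma$; after inserting, wherever consecutive cubes meet only in a lower-dimensional grid face, a short staircase of face-adjacent cubes through a common point of $\gamma$, one has a face-adjacent chain in which every cube still contains an $r'$-occupiable point, so the clearance lemma puts all their centres in $\widetilde G_0$. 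In particular every cube $\gamma$ passes through is coloured Yellow. I expect this boundary bookkeeping — the staircases, together with keeping all the clearance inequalities strict — to be the main obstacle. (Note $\widetilde G_0$ can properly contain $G_0$, e.g. if an obstacle sits at distance strictly between $r$ and $r'$ from some cube centre, so ``depth-first search of $G_0$'' is to be read as the search of $\widetilde G_0\supseteq G_0$ that the algorithm actually performs; the extra exploration is harmless below.)

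Finally, correctness for modified $COVER_n$. Visiting $c(D)$ brings Bob's centre within the circumradius $\le\epsilon/2<r'$ of every point of $D$. So let $w$ be any point within $r$ of an $r'$-path $\gamma$ from $S$, pick $y$ on $\gamma$ with $d(w,y)<r$, and let $D_y$ be the cube containing $y$; by the previous paragraph $D_y$ is Yellow, so Bob's centre visits $c(D_y)$, and $d(w,c(D_y))\le d(w,y)+d(y,c(D_y))<r+\epsilon/2<r'$. Hence Bob comes within $r'$ of every point it is required to. If there is no $r'$-path from $S$ at all there is nothing to cover, and whatever CBoxes does it solves the (vacuous) instance; otherwise $S$ is $r'$-occupiable, so the clearance along the short opening segment from $S$ to $c(C_0)$ stays above $r$, the first move of CBoxes succeeds, and the analysis above applies. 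Appending a final straight move from $c(C_0)$ back to $S$ — obstacle-free by the same estimate — returns Bob to the start as $COVER_n$ demands, and termination was noted above. The main obstacle in the whole argument is the boundary bookkeeping of the third paragraph together with keeping the clearance inequalities strict throughout.
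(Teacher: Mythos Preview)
Your overall architecture matches the paper's: CBoxes is a DFS on the cube graph, and the geometric content is a clearance estimate tying $r'$-occupiable points to successful straight-line moves. Your staircase bookkeeping is in fact more careful than the paper's throwaway ``can be taken to share an $(n-1)$-face''. But there is one genuine gap.

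The assertion ``an induction on the recursion identifies the Yellow cubes at termination as exactly those whose centre lies in $\widetilde G_0$'' is not justified, and can fail. Your clearance lemma is two-sided: it needs \emph{both} $C$ and $D$ to contain an $r'$-occupiable point in order to guarantee the move $c(C)\to c(D)$ succeeds. But a Yellow cube need not contain any $r'$-occupiable point (you yourself note $\widetilde G_0$ may properly contain $G_0$). So suppose $D\in\widetilde G_0$ is first probed from a Yellow neighbour $C'$ that is \emph{not} $r'$-occupiable anywhere; nothing in your lemma prevents that probe from hitting an obstacle, whereupon $D$ is coloured Red and never retried, even though another neighbour would have reached it. Thus $D$ is in $\widetilde G_0$ but not Yellow, and your inclusion $G_0\subseteq\text{Yellow}$ no longer follows from $G_0\subseteq\widetilde G_0$.

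The paper closes this hole with a \emph{one-sided} lemma, which is both stronger in the relevant direction and easier to prove than your two-ball estimate: if the move from $c(C)$ to $c(D)$ fails, then $D$ contains no $r'$-occupiable point at all. Indeed, if Bob's centre is at $q$ on the segment when it meets an obstacle point $o$, then $d(q,o)\le r$, $d(q,c(D))\le l\le\epsilon/2$, and every $d\in D$ satisfies $d(d,c(D))\le\epsilon/2$, so $d(d,o)\le r+\epsilon$. Hence a Red cube can never lie on any $r'$-path, regardless of which neighbour the failed probe came from. With this in hand, every cube in your staircase chain (each containing an $r'$-occupiable point) is immune to being coloured Red, and the DFS must eventually colour it Yellow. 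Swapping this one-sided triangle-inequality lemma in for your two-sided clearance lemma repairs the argument completely.
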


\begin{proof}

We first claim that the $COVER_n$ problem is solved by having Bob visit every vertex of $G_0$.

Since $l \leq \epsilon/\sqrt{n}$, the maximum distance between two points in a cube is at most $\epsilon$.  This implies that if the center of a robot of radius $r' = r+\epsilon$ can be SOMEWHERE in a cube without the robot intersecting some obstacle then Bob's center can be ANYWHERE in that same cube.  Our proof is based on this fact.

Consider a point $x \in X$ which Bob is required to come close to by modified $COVER_n$.  Then there is an $r'$-path from $S$ to a point $p$ with $d(x,p)\leq r$.  Let $C_1, C_2, \dots, C_k$ be the sequence of cubes that this path passes through, so $S$ is in cube $C_1$ and $p$ is in cube $C_k$.    Notice that $C_{i}$ and $C_{i+1}$ can be taken to share an $(n-1)$-dimensional face.  Then Bob can pass freely from the center of $C_i$ to the center of $C_{i+1}$, so the centers of $C_1$ and $C_k$ are in the same connected component $G_0$ of $G$.  Let $p_k$ denote the center of $C_k$.  Since $d(x,p) \leq r$ and $d(p,p_k)\leq \epsilon/2$, we have $d(x,p_k) \leq r+\epsilon/2 < r + \epsilon$.  Thus, every point which is required to be explored would be explored if Bob visits every vertex of $G_0$.

That CBoxes has Bob visit every vertex of $G_0$ essentially follows from the definition of the CBoxes and CGraphTraverse algorithms.  The first step of CBoxes is to move from $S$ straight to the center of $C_0$, which is possible if a robot of radius $r'$ can occupy $S$ - that is, if any $r'$-paths exist.  Then, at each subsequent step of the algorithm, Bob moves along an edge $e$ of $G$.  The only time these algorithms do not finish traversing $e$ is if in moving from a cube $C$ to an adjacent cube $D$, Bob runs into an obstacle.  In this case, CGraphTraverse($C$) has Bob return to $C$, never to visit $D$ again.  This is OK, since Bob need only visit those cubes that can contain the center of a radius $r + \epsilon$ sphere.  We'll see that in these circumstances, a sphere of radius $r + \epsilon$ cannot have its center at any point in $D$, so $e \in \mathcal{O}$:  let $d$ be a point in $D$ and let $c$ be Bob's center upon hitting an obstacle point, $o$.  Notice that the distance from $c$ to $o$ is $r$.  The distance from $c$ to the center of $D$ is at most $l$, which is at most $\epsilon/2$.  The distance from $d$ to the center of $D$ is at most $\epsilon/2$.  By the triangle inequality, the distance from $d$ to $o$ is at most $r + \epsilon$.

Finally, Bob only moves to unexplored adjacent cubes, making Bob's path a tree.  This proves the lemma.

\end{proof}

The CBoxes algorithm does solve the modified $COVER$ problem, but unfortunately does not necessarily do so competitively.  The problem comes from spaces which have a \emph{bottleneck};  we say a space has a bottleneck if there are non-obstacle points within distance $r'$ of an $r$-path from $S$ that are not within distance $r'$ of an $r'$-path from $S$.  These arise, for instance, from corridors of diameter between $2r$ and $2r'$.  If there is a bottleneck, a radius $r$ robot might, for example, start in a very small room and unwittingly travel through a corridor of radius less than $2r'$ into a very big room.  Modified $COVER$ only requires coverage of the small room, so a robot that covers the large room is not optimal.  We see two ways of alleviating this situation.  The first is to provide Bob with a myopic visual sensor, able to detect bottlenecks:  that is, able to detect all obstacle points within distance $\epsilon$ or so.  To keep with the non-visual emphasis of this paper, we choose to analyze the second solution, by restricting our spaces to have no bottlenecks.

\begin{theorem}\label{thm:CUpperBound}
Assume $\epsilon < 2r$.  For a given space $X$ without bottlenecks, let $l_{opt}$ be the length of an optimal path solving modified $COVER_n$.  Then there exist constants $c(n,r,\epsilon)$ and $d(n,r,\epsilon)$ such that the length of the path generated by CBoxes$_{\epsilon}$ is at most $c(n,r,\epsilon)l_{opt}+d(n,r,\epsilon).$
\end{theorem}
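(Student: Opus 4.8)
The plan is to show that CBoxes, which performs a depth-first search of the tree-subgraph of $G_0$, has path length bounded by a constant times the number of vertices of $G_0$ (plus the length of the initial approach from $S$), and then to bound the number of vertices of $G_0$ by a constant times $l_{opt}$.

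First I would bound the length of the CBoxes path in terms of $|V(G_0)|$. Each edge of $G$ has length $l \leq \epsilon/2$. The depth-first search traverses each edge of a spanning tree of (the explored part of) $G_0$ at most twice, and additionally makes at most one ``probe-and-return'' excursion of length $\leq 2l$ across each $\mathcal{O}$-edge incident to an explored cube. Since $G$ has bounded degree $2n$, the number of such probed edges is at most $2n|V(G_0)|$, and the number of tree edges is $|V(G_0)|-1$. Hence the total path length is at most $(2 + 4n)\, l \cdot |V(G_0)| + d(S,C_0)$, where $d(S,C_0) \leq l\sqrt n$ bounds the initial straight-line approach. This gives the path length $\leq c_1(n,\epsilon)\,|V(G_0)|$ for an explicit constant.

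Second I would bound $|V(G_0)|$ by $c_2(n,r,\epsilon)\, l_{opt} + d_2$. This is where the no-bottleneck hypothesis is essential. By Theorem \ref{thm:CBoxesWorks}, every vertex of $G_0$ is a cube center reachable from $C_0$ by a path in $G_0$, i.e.\ via a sequence of cubes each of which can hold the center of a radius-$r'$ robot; chaining these, every cube of $G_0$ lies within $r'$ of an $r'$-path from $S$ (here is exactly where absence of bottlenecks lets us identify ``reachable through $G_0$'' with ``near an $r'$-path from $S$''). An optimal modified-$COVER_n$ path of length $l_{opt}$ must have Bob's center come within $r'$ of every point that is within $r$ of an $r'$-path from $S$; in particular, for each cube $C$ of $G_0$ there is a point of the optimal path within $r' + (\text{diam of }C) \leq r' + \epsilon$ of the center of $C$. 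Thus the centers of cubes of $G_0$ are all contained in the $(r'+\epsilon)$-neighborhood of the optimal path, a set of $n$-volume at most (optimal path length)$\cdot \omega_{n-1}(r'+\epsilon)^{n-1}$ plus lower-order endcap terms, where $\omega_{n-1}$ is the volume of the unit $(n-1)$-ball. Since distinct cube centers are $\geq l$ apart, a packing argument (each cube center ``owns'' its cube of volume $l^n$, and these cubes are essentially disjoint and contained in a slightly larger neighborhood of the path) yields $|V(G_0)| \leq c_2\, l_{opt} + d_2$ with $c_2,d_2$ depending only on $n,r,\epsilon$. Combining with the first step proves the theorem.

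The main obstacle I expect is making the second step rigorous — specifically, converting ``near an $r'$-path from $S$'' and ``near the optimal path'' into a clean volumetric count, and verifying that the no-bottleneck condition genuinely forces every cube of $G_0$ to be covered by (hence charged to) the optimal path. One must be careful that $l_{opt}$ here is the optimal \emph{cover} length, not a point-to-point distance, so the tube-volume estimate around a curve of that length is the right object; and one must confirm the endcap/boundary contributions only affect the additive constant $d$. The DFS-length bound in the first step is routine given bounded degree, so essentially all the content is in the covering-to-volume-to-count chain of the second step.
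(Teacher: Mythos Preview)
Your plan is essentially the paper's own argument: bound the CBoxes path length by (a constant times) the number of cubes Bob visits or probes, show each such cube lies within a fixed radius of the optimal cover path (this is where the no-bottleneck hypothesis enters), and then count cubes near a length-$l_{opt}$ curve by a volume/packing argument. The paper carries out the last step discretely---it shows the optimal path meets at most $3^n(l_{opt}/l+1)$ cubes and that every visited-or-attempted cube lies in a ball of radius $3.5\epsilon+2r$ around one of those---while you use a tube-volume packing; these are interchangeable variants.

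One point to tighten, however: the set of cubes Bob actually visits can strictly contain $V(G_0)$. An edge $e$ may lie in $\mathcal{O}$ (no $r'$-path between the two centers) while the straight segment between the centers is still an $r$-path, so Bob traverses $e$; Theorem~\ref{thm:CBoxesWorks} only yields $V(G_0)\subseteq V_{\mathrm{Bob}}$, not equality. Hence bounding $|V(G_0)|$ does not by itself bound the DFS length. Relatedly, your placement of the no-bottleneck hypothesis is off: for genuine vertices of $G_0$ the chaining of $r'$-paths already puts each center \emph{on} an $r'$-path from (near) $S$, so no hypothesis is needed there. Where no-bottleneck is actually used is for the cubes in $V_{\mathrm{Bob}}\setminus V(G_0)$ and for the probed Red cubes: Bob's center comes within $l<r'$ of every such cube's center along an $r$-path, and absence of bottlenecks then places that center within $r'$ of an $r'$-path; since the optimal cover path must come within $r'$ of any point on an $r'$-path, your covering bound follows. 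The paper sidesteps this by applying the no-bottleneck step uniformly to every cube Bob ``visits or tries to visit,'' which also makes your separate degree-$2n$ probe accounting in step~1 unnecessary.
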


For fixed $n$, $r$, and $\epsilon$, this is a linear upper bound, making CBoxes$_{\epsilon}$ optimally competitive:

\begin{corollary}\label{cor:Eoptimallycompetitive}
When restricted to spaces without bottlenecks, the CBoxes$_{\epsilon}$ algorithm is optimally competitive for solving the modified $COVER_n$ problem.
\end{corollary}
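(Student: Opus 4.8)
The plan is to check, directly against the definitions of Section~\ref{sec:competitiveness}, that the linear function $g(x)=x$ is the witnessing function: it is a universal asymptotic lower bound on competitiveness for the modified $COVER_n$ problem restricted to spaces without bottlenecks, and CBoxes$_\epsilon$ is $O(g)$-competitive on that class. The second point is immediate from Theorem~\ref{thm:CUpperBound}: for a no-bottleneck space the path CBoxes$_\epsilon$ produces has length at most $c(n,r,\epsilon)\,l_{opt}+d(n,r,\epsilon)$, so $f_{(\mathrm{CBoxes}_\epsilon,\,l_{opt})}(x)\le c(n,r,\epsilon)\,x+d(n,r,\epsilon)\in O(x)$. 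Hence the real content of the corollary is the claim, already flagged in Section~\ref{sec:competitiveness}, that no algorithm can beat linear competitiveness on this restricted problem.

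For that claim I would argue as follows. For \emph{every} instance $I$ and \emph{every} algorithm $A$ solving modified $COVER_n$, the path $A$ traverses on $I$ has length $l_A(I)\ge l_{opt}(I)$, since $l_{opt}(I)$ is by definition the minimum length of a path accomplishing the task on $I$. Therefore
\[
f_{(A,\,l_{opt})}(x)=\sup\{\, l_A(I) : l_{opt}(I)\le x \,\}\;\ge\;\sup\{\, l_{opt}(I) : l_{opt}(I)\le x \,\},
\]
where on the right $I$ ranges over no-bottleneck instances only. It then suffices to exhibit a family of no-bottleneck instances whose optimal covering lengths are unbounded and not too sparsely spread, so that the right-hand supremum is $\Omega(x)$. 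I would take $I_R$ to be the obstacle-free instance whose space is a Euclidean $n$-ball of radius $R$ with $S$ at its center, for $R>r'$. Such a space has no bottleneck: with no interior obstacles, both the region a radius-$r$ robot can reach-and-cover and the region a radius-$r'$ robot can reach-and-cover fill all of $X$, so no point within $r'$ of an $r$-path fails to be within $r'$ of an $r'$-path. The optimal covering length $l_{opt}(I_R)$ is squeezed between two positive constant multiples of $R^n/(r')^{\,n-1}$: below, because a path of length $\ell$ has an $r'$-tube of $n$-volume $O(\ell (r')^{\,n-1})$ which must cover a region of volume $\asymp R^n$; above, by an explicit boustrophedon sweep. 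A routine intermediate-value argument in $R$ then gives $\sup\{\, l_{opt}(I_R) : l_{opt}(I_R)\le x\,\}\ge c(n,r)\,x$ for all $x$ beyond a threshold, with $c(n,r)>0$, so $f_{(A,\,l_{opt})}\in\Omega(x)$ for every $A$; that is, $g(x)=x$ is a universal lower bound.

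Combining the two halves against the definition of optimal competitiveness in Section~\ref{sec:competitiveness} completes the argument. The only step that requires any care — and it is mild — is confirming that the lower-bound family genuinely survives the restriction to spaces without bottlenecks and is spread densely enough to yield $\Omega(x)$ rather than merely an unbounded function; the empty-ball example (or, equally, a large fat box) settles both points, and everything else is bookkeeping against the definitions and against Theorem~\ref{thm:CUpperBound}. I do not expect this to present any real obstacle.
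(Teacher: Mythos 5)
Your argument is correct and follows the paper's own route: the upper bound is read off from Theorem~\ref{thm:CUpperBound}, and optimality then follows because a linear function is a universal lower bound on competitiveness for modified $COVER_n$ (the paper simply declares this lower bound ``obvious'' in Sections~\ref{sec:competitiveness} and~\ref{sec:lb}, using $l_A(I)\ge l_{opt}(I)$). Your only addition is to make the obvious part explicit with the obstacle-free ball family (no-bottleneck check plus the tube-volume squeeze on $l_{opt}$), which is a sound but not essentially different elaboration of the same proof.
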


\begin{proof}[Proof of Theorem \ref{thm:CUpperBound}]

By Lemma \ref{thm:CBoxesWorks}, most of the CBoxes algorithm is in a depth-first search of a subtree of $G$.  The number of edges in a tree is the number of vertices minus 1, and Bob travels on each edge exactly twice.  The length of each edge is $l$.  Let $o$ be the optimal $r'$-path that solves $COVER$.

We claim that the number, $c$, of cubes that $o$ passes through is at most $3^n(l_{opt}/l+1)$.  To see this, first consider a path of length $l$ (the width of a box).  The number of cubes that this path passes through is at most the maximum number of cubes that intersect an $l$-ball.  Projecting this $l$-ball onto any dimension, we see that it intersects at most 3 cubes (in that dimension).  Thus the ball is bounded by a bounding box, three cubes on a side, so a path of length $l$ intersects at most $3^n$ cubes (In fact, the actual maximum is $3 \times 2^{n-1}$).  Now, break $o$ into several paths of length $l$ and one of length at most $l$.  If $l_{opt} = 0$, there is one such sub-path, and otherwise there are $\lceil l_{opt}/l \rceil$ sub-paths.  At any rate, the number of sub paths is at most $l_{opt}/l+1$, and each one intersects at most $3^n$ cubes, proving the claim.

Now, we claim that the number of cubes whose centers CBoxes visits or tries to visit is $O(c)$.  Let $C$ be such a cube.  Then Bob's center comes within $l < \epsilon < r'$ of the center of $C$.  Since there are no bottlenecks, $C$'s center is within $r'$ of a point $p$ on an $r'$-path.  Since $o$ solves $COVER$, $o$ contains a point $q$ within $r'$ of $p$.  Let $D$ be the cube containing $q$.  By the triangle inequality, $C$ is contained in the radius $\epsilon/2 + l + r' + r' + \epsilon/2 \leq 3.5\epsilon + 2r$ ball centered at $D$.  If $V(3.5\epsilon + 2r)$ is the volume of the $n$-sphere of radius $3.5\epsilon + 2r$, then there are at most $V(3.5\epsilon + 2r)/l^n$ cubes in this sphere.  This is certainly no more than $[(3.5\epsilon + 2r)/l]^n$. Hence, the number of cubes visited successfully or unsuccessfully by Bob is at most $c[(3.5\epsilon + 2r)/l]^n$.

The length of Bob's path is at most $2l$ times the number of Boxes Bob visits or tries to visit.  This is at most $2lc[(3.5\epsilon + 2r)/l]^n \leq 2l3^n(l_{opt}/l+1)[(3.5\epsilon + 2r)/l]^n$.

\end{proof}

\section{Analysis of Boxes}\label{sec:Boxes}

\begin{theorem}\label{thm:BoxesWorks}
If there is an $(r+\epsilon)$-path, $p$, from $S$ to $T$, then Boxes$_{\epsilon}$($S,T$) will move Bob from $S$ to $T$.
\end{theorem}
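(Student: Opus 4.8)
The plan is to follow the template of the proof of Theorem~\ref{thm:CBoxesWorks}, adding the bookkeeping needed to handle the expanding virtual ellipsoid, much as Gabriely and Rimon handle the expanding ellipse in the analysis of CBUG (Theorem~\ref{thm:CBUGupper}).

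First I would record the two geometric facts that make the discretization faithful, both consequences of the choice $l = \min\{\epsilon/2, \epsilon/\sqrt{n}\}$: (i) any two points in a common cube are at distance at most $\epsilon$, so if the center of a radius-$r'$ robot fits somewhere in a cube then Bob's center fits everywhere in that cube (in particular this justifies the first straight move from $S$ to the center of $C_0$, and later the final straight move to $T$); and (ii) if Bob meets an obstacle while moving from the center of a cube $C$ to the center of an adjacent cube $D$, then, by the same triangle-inequality computation as in the proof of Theorem~\ref{thm:CBoxesWorks}, no radius-$r'$ center fits anywhere in $D$, i.e.\ the edge of $G$ joining the centers of $C$ and $D$ lies in $\mathcal{O}$. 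I would also note that GraphTraverse moves only toward \emph{White} cubes, and that every cube lying completely outside the current ellipsoid $\mathcal{E}$ is colored Pink before GraphTraverse runs; hence GraphTraverse never drives Bob into the virtual boundary, and every obstacle it encounters is real (as the pseudocode already asserts).

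Next, for a fixed iteration with bounding ellipsoid $\mathcal{E}$, I would argue exactly as in Theorem~\ref{thm:CBoxesWorks} that GraphTraverse($C_0,T$) performs a depth-first search along a subtree of $G$ and visits every cube reachable from $C_0$ through edges not in $\mathcal{O}$ that stay among non-Pink cubes; in particular it returns having reached the cube containing $T$ whenever that cube lies in this reachable set. Then I would bring in the $r'$-path $p$: as in the proof of Theorem~\ref{thm:CBoxesWorks}, $p$ passes through a sequence of cubes $C_0 = D_1, D_2, \dots, D_m$ with $T \in D_m$, where consecutive cubes can be taken to share an $(n-1)$-face and the edge between their centers is not in $\mathcal{O}$ (the connecting segment stays in $D_i \cup D_{i+1}$, and by fact (i) a radius-$r'$ center fits everywhere there). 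Every point of $D_1 \cup \dots \cup D_m$ lies within $\epsilon$ of a point of $p$, so for every such point $x$ we have $d(S,x) + d(x,T) \le \ell(p) + 2\epsilon$ (and for the $SEARCH$ variant, $d(S,x) \le \ell(p)+\epsilon$, where $T'=S$). Hence once the doubling parameter $a$ exceeds $\ell(p)+2\epsilon$ (respectively $2(\ell(p)+\epsilon)$ for $SEARCH$), all of $D_1,\dots,D_m$ lie inside $\mathcal{E}$, none is colored Pink, and by the previous paragraph GraphTraverse reaches the cube of $T$; the outer loop then exits and Bob travels the final straight segment to $T$, which is obstacle-free by fact (i) applied to $D_m$.

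Finally I would verify that the algorithm does not halt before $a$ grows large enough. In an iteration whose ellipsoid does not yet contain all of $D_1,\dots,D_m$, either GraphTraverse already reaches $T$ (and we are done), or there is a last index $j$ for which $D_j$ is non-Pink; then $D_j$ is reachable from $C_0$ through non-Pink, non-$\mathcal{O}$ edges (along $D_1,\dots,D_j$) and is adjacent to the Pink cube $D_{j+1}$, so it is explored and is adjacent to a Pink cube — hence the ``no cube adjacent to a Pink cube was explored'' stopping condition does not fire and $a$ is doubled. The ``$S$ surrounded by points within Red cubes'' condition cannot fire either, since $D_2$ is adjacent to $C_0$ by a non-$\mathcal{O}$ edge and so is not Red (here we may assume $T \notin C_0$, the case $T \in C_0$ being immediate from fact (i)). Since $a$ doubles each iteration, after finitely many iterations it exceeds the threshold and the previous paragraph applies. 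The main obstacle I anticipate is not any single estimate — each is short — but making the two-way interaction between the virtual ellipsoid and the depth-first search precise, in particular confirming that none of the early-stopping tests can trigger while $T$ is still reachable but lies outside the current ellipsoid.
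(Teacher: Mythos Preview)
Your approach is exactly the paper's: invoke the analysis of Theorem~\ref{thm:CBoxesWorks} to get a chain of adjacent cubes along the $r'$-path $p$, and then observe that a sufficiently large ellipsoid contains them all. The paper's proof is in fact just a two-sentence sketch to this effect, so your write-up is considerably more thorough---in particular, your verification that neither stopping condition can fire while the ellipsoid is still too small is a detail the paper leaves entirely to the reader. One small slip: in your parenthetical justifying that the edge $D_iD_{i+1}$ is traversable, fact~(i) gives that \emph{Bob's} center (radius $r$) fits everywhere in $D_i\cup D_{i+1}$, not that a radius-$r'$ center does; this is all you need, but the phrasing should be adjusted.
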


\begin{proof}
    If there is an $r'$-path from $S$ to $T$ then the analysis from Theorem \ref{thm:CBoxesWorks} guarantees that there is a path moving directly between centers of adjacent cubes from $S$ to $T$.  Given a large enough bounding ellipsoid, Bob will find this path.
\end{proof}

We now compute an upper bound on complexity for Boxes.  First, though, note that Gabriely and Rimon use ellipses as the virtual boundary obstacles.  In fact, for the $NAV$ problem, the (rotationally symmetric) ellipsoid is the optimal shape in general, as an ellipsoid is precisely the locus of points along paths from $S$ to $T$ of a given length, which will play a roll in the proof below.

 \begin{theorem}
     \label{thm:easyUpperBound}
Let $l_{opt}$ be the length of the optimal path from $S$ to $T$ for a robot of radius $r + \epsilon$.   Then the length of the path generated by Boxes$_{\epsilon}$ is at most
    $$c_n (l_{opt})^n {\left(\frac{1}{\epsilon}\right)}^{n-1}+\frac{d_n}{\epsilon}+\epsilon$$
where $c_n = \frac{16 \cdot 32^{n-1}}{2^n-1}$ for $n = 2, 3$ and $c_n = \frac{16^nn^{(n-1)/2}}{2^n-1}$ for $n > 3$, and $d_n = 2*6^n$ for $n = 2,3$ and $d_n = \sqrt{n}*6^n$ for $n > 3$.
\end{theorem}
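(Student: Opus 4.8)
The plan is to follow the template of the CBUG analysis (Theorem~\ref{thm:CBUGupper}) and of Theorem~\ref{thm:CUpperBound}: view one run of Boxes$_\epsilon$ as a sequence of \emph{phases}, phase $i$ being the call to GraphTraverse inside the ellipsoid $\mathcal{E}_i=\{p:d(S,p)+d(p,T')\le a_i\}$ with $a_i=2^ia_0$, and bound the total path length by summing the cost of each phase up to the terminal one. Within phase $i$, GraphTraverse performs a depth-first search of the part of the free-space graph $G_0$ whose cube-centers lie in $\mathcal{E}_i$ (ignoring Pink cubes): it traverses each edge of the resulting spanning subtree exactly twice (length $l$ each) and makes at most $2n$ unsuccessful probe-moves (length $\le 2l$ each) per cube it colors Yellow. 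Hence $\mathrm{cost}_i$ is at most a constant times $n\,l$ times the number of cubes meeting $\mathcal{E}_i$; since $\mathcal{E}_i$ has diameter $a_i$ and therefore sits inside a ball of radius $a_i/2$, that number of cubes is at most $(a_i/l+2)^n$, so $\mathrm{cost}_i\le Cn\,l\,(a_i/l+2)^n$ for an absolute constant $C$ (one can take $C=6$).

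Next I would bound the index $i^\ast$ of the terminal phase. By hypothesis there is an $r'$-path $p$ from $S$ to $T$ of length $l_{opt}$; the key geometric fact — the one that makes the ellipsoid the ``optimal shape'' — is that for every point $q$ on $p$, $d(S,q)+d(q,T)\le\mathrm{length}(p)=l_{opt}$ by the triangle inequality, so $p$ lies inside $\{d(S,\cdot)+d(\cdot,T)\le l_{opt}\}$. By Theorem~\ref{thm:BoxesWorks} (more precisely the discretization in the proof of Theorem~\ref{thm:CBoxesWorks}), $p$ determines a path between centers of adjacent cubes from $C_0$ to the cube of $T$ each of whose edges lies outside $\mathcal{O}$; every center on this discrete path is within $l\sqrt n\le\epsilon$ of a point of $p$, so it satisfies $d(S,\cdot)+d(\cdot,T)\le l_{opt}+2\epsilon$ (and $d(S,\cdot)\le l_{opt}+\epsilon$ in the SEARCH case, where $T'=S$ and $\mathcal{E}_i$ is a ball about $S$). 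Consequently, as soon as $a_i\ge l_{opt}+2\epsilon$ (resp. $a_i\ge 2(l_{opt}+\epsilon)$ for SEARCH), $\mathcal{E}_i$ contains this whole discrete path and the DFS from $C_0$ must reach the cube of $T$ and halt; since $a_0\le d(S,T)+l\le l_{opt}+\epsilon$ and $a$ doubles each phase, the terminal value overshoots the threshold by at most one doubling, so $a_{i^\ast}<4(l_{opt}+2\epsilon)$ in either case.

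Finally I would add up the phases. Because $\mathrm{cost}_i\le Cn\,l\,(a_i/l+2)^n$ grows by a factor of essentially $2^n$ from one phase to the next, the sum is dominated by its last term: $\sum_{i=0}^{i^\ast}(a_i/l+2)^n\le\frac{2^n}{2^n-1}(a_{i^\ast}/l+2)^n$, which is exactly where the factor $1/(2^n-1)$ in $c_n$ comes from. Substituting $a_{i^\ast}<4(l_{opt}+2\epsilon)$, using $l=\min\{\epsilon/2,\epsilon/\sqrt n\}$ (so $1/l=2/\epsilon$ for $n=2,3$ and $1/l=\sqrt n/\epsilon$ for $n>3$, which explains the case split in the constants), expanding the $n$-th power of $(l_{opt}+O(\epsilon))$, and adding the initial straight move from $S$ to the center of $C_0$ and the final straight move into $T$ (each of length $\le\epsilon$, since those points share a cube with $S$ resp. $T$), collects into the leading term $c_nl_{opt}^n(1/\epsilon)^{n-1}$, the constant-in-$l_{opt}$ term $d_n/\epsilon$, and the additive $\epsilon$. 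The main obstacle is the middle step — pinning down precisely how large the bounding ellipsoid must become — since that is the only place genuine geometry enters (the triangle-inequality characterization of the ellipsoid, together with the $\epsilon$-fattening needed to pass from the continuous $r'$-path to the discrete center-path and back, and the verification that the DFS is not diverted by wrongly Red-colored cubes); once that bound on $a_{i^\ast}$ is in hand, the remaining two steps are a routine volume count and a geometric series, and the only real effort left is the bookkeeping that produces the stated $c_n$ and $d_n$.
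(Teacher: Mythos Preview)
Your proposal follows essentially the same strategy as the paper's proof: bound the per-phase cost by a multiple of $l$ times the number of cubes meeting $\mathcal{E}_i$, bound that cube count by a box containing $\mathcal{E}_i$, use the failure of the previous phase to bound $a_{i^\ast}$ in terms of $l_{opt}$, and sum the resulting geometric series.

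Two places where your accounting diverges from the paper's and would prevent you from landing on exactly the stated $c_n$, $d_n$:
\begin{itemize}
\item You charge $2n$ failed probes per Yellow cube, picking up an extraneous factor of $n$. The paper avoids this: every cube that is probed (successfully or not) gets colored once and is never probed again, so the total number of edge-traversals in a phase is at most twice the number of cubes intersecting $\mathcal{E}_i$ (Yellow \emph{and} Red together). This is why the paper's per-phase bound is simply $2l\cdot(\text{cubes in }\mathcal{E}_i)$, with no $n$.
\item Your bound $a_{i^\ast}<4(l_{opt}+2\epsilon)$ is looser than needed. Because a cube is Pink only when it lies \emph{entirely} outside $\mathcal{E}$, the continuous $r'$-path being contained in $\mathcal{E}_{i-1}$ already forces every cube it meets to be non-Pink, so no $\epsilon$-slack is required; the paper gets $l_{opt}>2^{i-1}a_0$ directly, hence $2^{i+1}a_0<4l_{opt}$, which feeds into the $16^n$ in $c_n$.
\end{itemize}
Apart from these constant-level discrepancies (and your ball-versus-hypercube choice, which is harmless), the argument is the same.
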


\begin{corollary}\label{cor:optimallycompetitive}
For a fixed $\epsilon$, the Boxes$_{\epsilon}$ algorithm is optimally competitive when solving both the modified $SEARCH_n$ and modified $NAV_n$ problems.
\end{corollary}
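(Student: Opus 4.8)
Granting Theorems~\ref{thm:easyUpperBound} and~\ref{thm:universallb}, the plan is simply to unwind the definition of optimal competitiveness. Fix $n$ and the clearance parameter $\epsilon$, and put $g(x)=x^{n}$. First, Theorem~\ref{thm:BoxesWorks}, together with the two ``$T$ unreachable'' exits written into the Boxes$_{\epsilon}$ pseudocode, shows that Boxes$_{\epsilon}$ is an algorithm that solves both modified $NAV_{n}$ and modified $SEARCH_{n}$. Next, for any instance $I$ with $\lo(I)\le x$ (where $\lo$ denotes the modified offline length, for a robot of radius $r'=r+\epsilon$), Theorem~\ref{thm:easyUpperBound} bounds the length of the path Boxes$_{\epsilon}$ generates on $I$ by $c_{n}\lo(I)^{n}\epsilon^{-(n-1)}+d_{n}\epsilon^{-1}+\epsilon\le c_{n}\epsilon^{-(n-1)}x^{n}+d_{n}\epsilon^{-1}+\epsilon$; taking the supremum over all such $I$ gives $f_{(\mathrm{Boxes}_{\epsilon},\lo)}\in O(x^{n})=O(g)$, so Boxes$_{\epsilon}$ is $O(g)$-competitive with respect to $\lo$. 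Finally, Theorem~\ref{thm:universallb} asserts that $g(x)=x^{n}$ is a universal asymptotic lower bound on competitiveness with respect to $\lo$ for each of the two tasks. Since one and the same $g$ is a universal lower bound and Boxes$_{\epsilon}$ is $O(g)$-competitive, Boxes$_{\epsilon}$ is optimally competitive, for both modified $NAV_{n}$ and modified $SEARCH_{n}$.

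Because the corollary leans on Theorem~\ref{thm:easyUpperBound}, I would also prove that theorem, along the lines of Theorem~\ref{thm:CUpperBound}. As with CBoxes, Boxes$_{\epsilon}$ amounts to a depth-first search of a subtree of the cubical-lattice dual $G$, but run repeatedly inside the solid ellipsoids $\mathcal{E}_{a}$ with $a$ successively $a_{0},2a_{0},4a_{0},\dots$, where $a_{0}=d(S,T')+l$ and $l=\min\{\epsilon/2,\epsilon/\sqrt n\}$. Within one iteration the search terminates (only finitely many cubes meet $\mathcal{E}_{a}$), each tree edge has length $l$ and is traversed at most twice, and each ``tried'' (Red) neighbour costs at most $2l$ more; moreover every cube Bob visits or probes meets $\mathcal{E}_{a}$ and so lies in $N_{\epsilon}(\mathcal{E}_{a})$. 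Hence the length contributed by the iteration with parameter $a$ is at most $2l$ times the number of cubes meeting $\mathcal{E}_{a}$, which is at most $\mathrm{vol}(N_{\epsilon}(\mathcal{E}_{a}))/l^{n}=O((a/l)^{n})$, since $\mathcal{E}_{a}$ (an ellipsoid with foci $S$ and $T$ for $NAV$, a ball of radius $a/2$ for $SEARCH$) sits in a box of side at most $a$.

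Two estimates then finish it. First I would bound the terminal value $a_{f}$ of $a$: let $p$ be an optimal $r'$-path from $S$ to $T$. By the argument in the proof of Theorem~\ref{thm:CBoxesWorks}, the cubes $p$ passes through are all colored Yellow when the search reaches them and form a chain from the cube $C_{0}$ of $S$ to the cube of $T$, consecutive members meeting along an $(n-1)$-face; and every point $q$ of $p$ satisfies $d(S,q)+d(q,T)\le\lo$ in the $NAV$ case and $d(S,q)\le\lo$ in the $SEARCH$ case, so each of these cubes lies in $\mathcal{E}_{\lo}$, respectively in $B(S,\lo)=\mathcal{E}_{2\lo}$. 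Thus once $a\ge 2\lo$ none of these cubes is Pink, GraphTraverse reaches the cube of $T$, and Boxes$_{\epsilon}$ halts; since $a_{0}\le\lo+l$ and $a$ doubles, $a_{f}\le 4\lo+2l=O(\lo+\epsilon)$. Second, summing $2l\cdot O((a/l)^{n})$ over the geometric sequence $a=a_{0},2a_{0},\dots,a_{f}$ bounds the total by $\tfrac{2^{n}}{2^{n}-1}$ times the last term -- the source of the $1/(2^{n}-1)$ in $c_{n}$ -- and then substituting $a_{f}=O(\lo+\epsilon)$ and $l=\min\{\epsilon/2,\epsilon/\sqrt n\}$ (the two choices of $l$ producing the two cases of $c_{n}$ and $d_{n}$), and adding the final straight-line move inside the cube of $T$ (length at most $\epsilon$), yields the stated bound.

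The main obstacle is the first of these two estimates: making precise that Boxes$_{\epsilon}$ really stops at $T$ the first time the virtual ellipsoid is large enough, i.e.\ that neither ``$T$ unreachable'' exit triggers prematurely and that the depth-first search genuinely traverses the Yellow chain furnished by the optimal $r'$-path. This is the quantitative counterpart of the correctness statement in Theorem~\ref{thm:BoxesWorks}, and it is also where the $NAV$/$SEARCH$ distinction shows up, costing the extra factor of $2$ in $a_{f}$ because the $SEARCH$ virtual boundary is the ball $\{p:d(S,p)\le a/2\}$ rather than an ellipsoid with foci $S$ and $T$. The rest -- the volume/cube counting and the geometric-series bookkeeping that pins down the explicit constants $c_{n}$ and $d_{n}$ -- is routine.
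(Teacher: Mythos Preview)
Your proposal is correct and follows essentially the same route as the paper: the corollary itself is immediate from combining the universal lower bound $l_{opt}^n$ of Theorem~\ref{thm:universallb} with the upper bound of Theorem~\ref{thm:easyUpperBound}, and your sketch of the latter---counting cubes meeting the ellipsoid, summing a geometric series over the doubling iterations, and bounding the terminal $a$ by a constant times $l_{opt}$---mirrors the paper's proof (which is placed immediately after the corollary). Your explicit separation of the $NAV$ and $SEARCH$ cases, with the ball $\mathcal{E}_{2l_{opt}}$ costing an extra factor in the terminal $a$ for $SEARCH$, is a point the paper's argument glosses over, but it affects only the constants and not the competitiveness class, so the corollary is unaffected.
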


\begin{proof}
As is the case with the CBoxes algorithm, the Boxes algorithm has Bob traverse a subtree of the graph $G$ defined in Section \ref{sec:CBoxes}.  The number of edges in a tree is the number of vertices minus 1, and Bob travels on each edge at most twice.  Furthermore, when Bob makes a false start down a blocked edge to some cube, Bob never attempts to move to that cube again (as it will be colored Red).  Thus the path of Boxes during any iteration is at most $2l$ times the number of cubes that intersect or are contained in $\mathcal{E}$.  For an easy and succinct upper bound on the number of such cubes, we note that the ellipsoid is contained in the hypercube of volume $(2a)^n$ centered at the barycenter of the ellipsoid.  This hypercube intersects at most $(2a/l + 2)^n < (4a/l)^n$ cubes in the partition of $X$.

Suppose Boxes has terminated after iteration $i$, with $a=2^ia_0$.  If $i > 0$, then $a$ is finally large enough that GraphTraverse finds its way to $T$, while the previous bounding ellipsoid is not large enough.  In particular, $l_{opt} > a/2 = 2^{i-1}a_0$, since otherwise the optimal path would be entirely within the previous bounding ellipsoid, and hence Boxes would have found a path between the centers of White cubes in that iteration.  Thus the total number of edges that Boxes traverses is at most

        \begin{eqnarray*}
        \sum_{j=0}^i \left(\frac{4\cdot 2^{j}a_0}{l}\right)^n
        &=&\left(\frac{4a_0}{l}\right)^n\sum_{j=0}^{i}2^{jn}\\
        &=&\left(\frac{4a_0}{l}\right)^n\frac{(2^{i + 1})^n-1}{2^n-1}\\
        &<&\left(\frac{4}{l}\right)^n\frac{(2^{i + 1}a_0)^n}{2^n-1}\\
        &<&\frac{1}{2^n-1}\left(\frac{16l_{opt}}{l}\right)^n.
        \end{eqnarray*}
If $i = 0$, then of course $l_{opt} \geq d(S,T)$, so we still have $l_{opt} > a_0/2$ unless possibly when $d(S,T) \leq l$.  If $d(S,T) \leq l$, $a = a_0 \leq 2l$.  In this case, the total number of edges Boxes traverses is at most
        $$\left(\frac{2\cdot a}{l}+2\right)^n \leq 6^n.$$
Each edge has length $l$.  Remembering that we have to move to and from the centers of the first and last cubes, we may need to travel an additional length $\sqrt{n}l \leq \epsilon$.  Noting that $l$ is defined in terms of $\epsilon$ gives an upper bound on the total distance traveled while executing Boxes:
    $$\frac{l}{2^n-1}\left(\frac{16l_{opt}}{l}\right)^n + 6^nl + \epsilon \leq c_n (l_{opt})^n {\left(\frac{1}{\epsilon}
    \right)}^{n-1}+\frac{d_n}{\epsilon}+\epsilon,$$
    where $c_n$ and $d_n$ are the constants in the statement of the theorem.

\end{proof}

\section{Motivations and Observations}\label{sec:mathmotivation}

We wish to comment on the motivations and observations for the various algorithms above.

The key idea for CBoxes and Boxes is to consider the tasks from a coarse-geometric viewpoint.  The introduction of the clearance parameter $\epsilon$ and the modification of the tasks allow us to approximately, or coarsely, achieve the initial goals.  The introduction of the grid of cubes is to have Bob discretely sample the unknown environment.  Indeed, the ability to navigate around an obstacle in dimension greater than 3 cannot be accomplished by simple clockwise traversal of an object -- a robot cannot touch every point on the boundary of an $n>1$ dimensional obstacle in finite time -- so some discretization is necessary (as is having a nonconstant amount of memory).

In terms of obstacle boundaries, the discretization we invoke should be familiar to, for instance, image analysts, graphics programmers, or anyone modelling 3-dimensional objects.  We use the discrete sampling of the space to form a mesh of points near an object, and approximate the object using this mesh.  In mathematical terms, we are essentially taking the shadow of the $\epsilon$-Rips complex of centers of cubes which are within $r$ of an obstacle.  The \emph{$d$-Rips complex} of a set of points $V$ with known distances between the points is the abstract simplicial complex such that there is a simplex with vertices $\{v_\alpha\} \subset V$ if and only if the maximal distance between points in $\{v_\alpha\}$ is at most $d$.  The \emph{shadow} of the Rips complex is the projection of the Rips complex into $\R^n$, where a simplex with vertices $\{v_{\alpha}\}$ is mapped to the convex hull of the points $\{v_{\alpha}\}$.  Thus, the shadow of the Rips complex can be thought of as a \emph{local convex hull} of grid points near obstacles.  Although Bob does not actually compute the shadow of the Rips complex when executing our algorithms, this was a motivation for our algorithm.

In fact, we believe it is possible to eliminate dependence on a particular decomposition of $X$ into cubes, allowing motion in arbitrary directions and allowing arbitrary points of contact with obstacles.  Such a modification could require computation of the Rips complex and its shadow.

\begin{figure}
\includegraphics[width=1.5in]{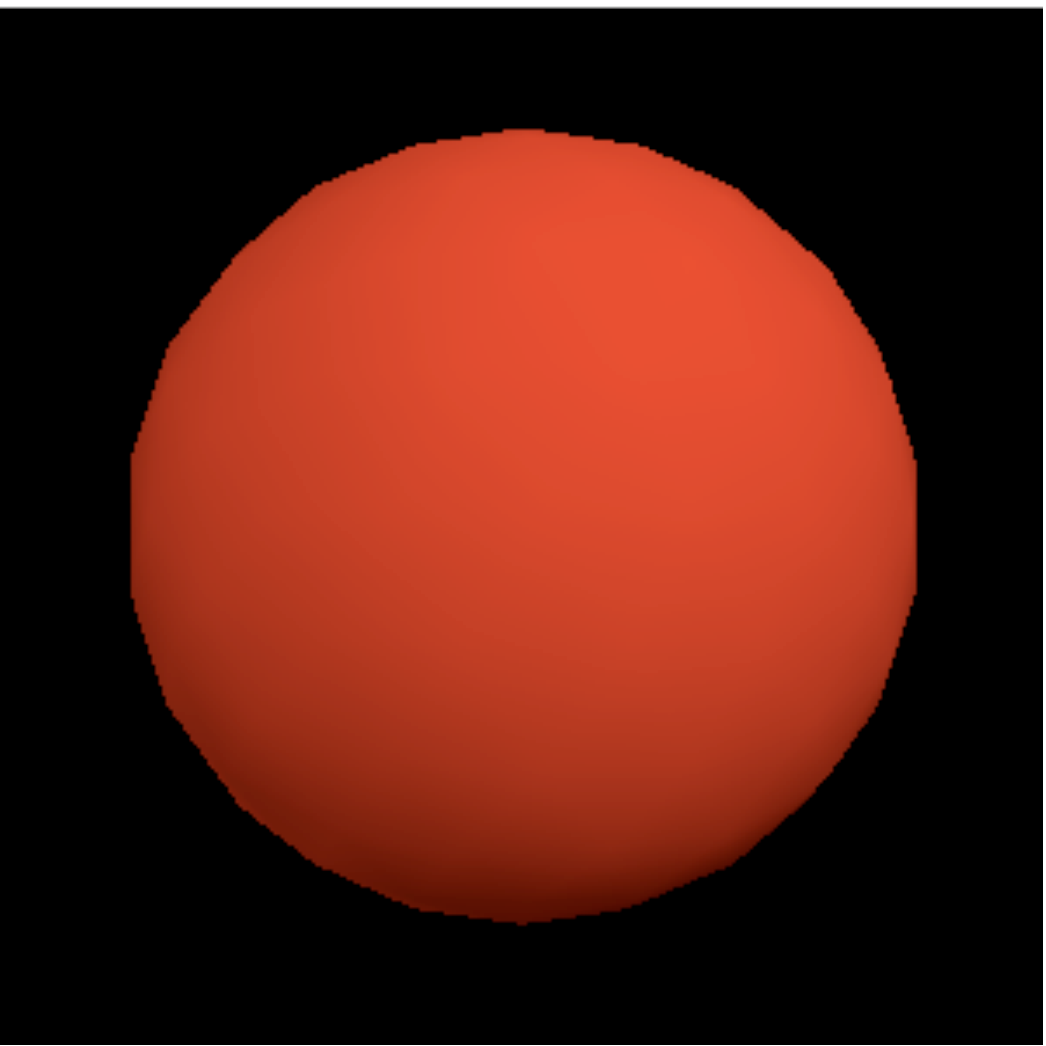}
\includegraphics[width=1.5in]{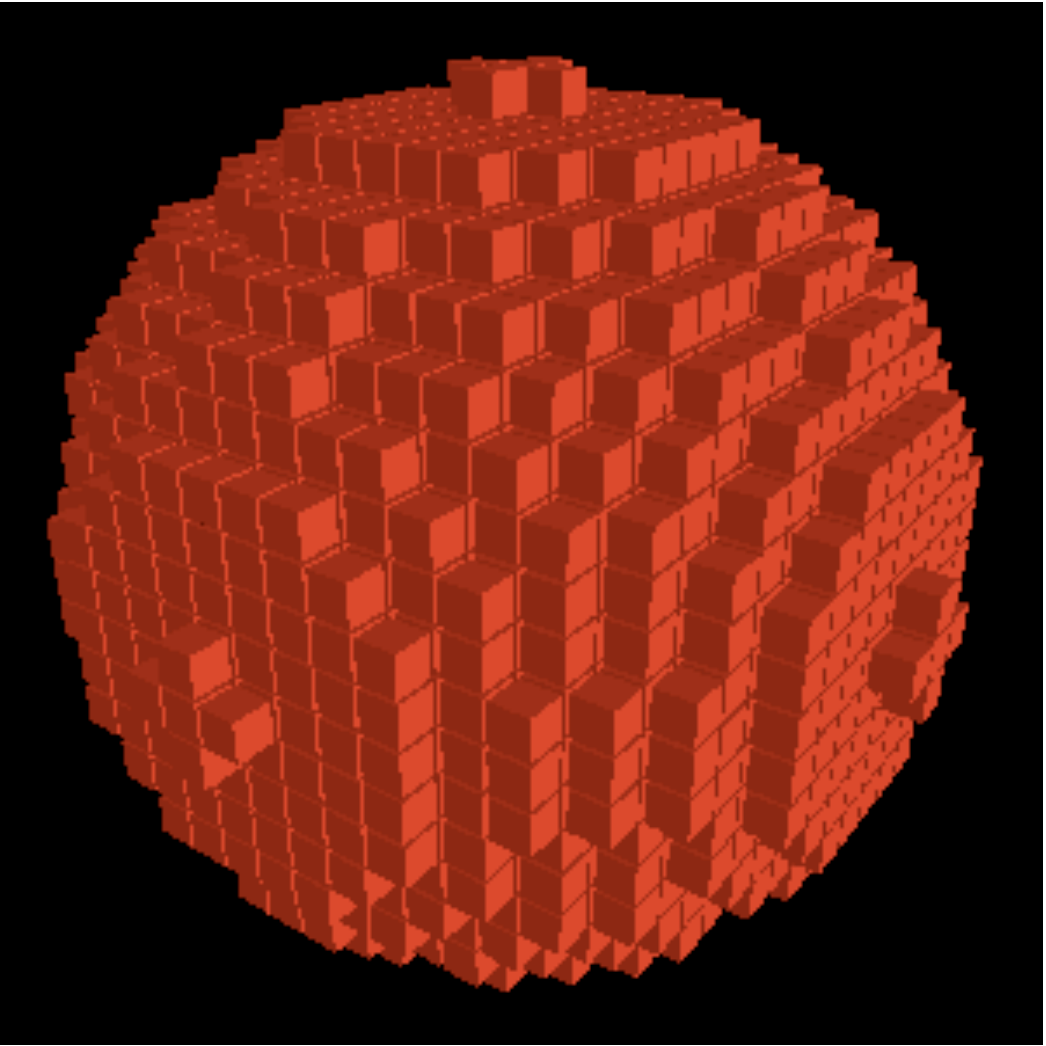}
\includegraphics[width=1.5in]{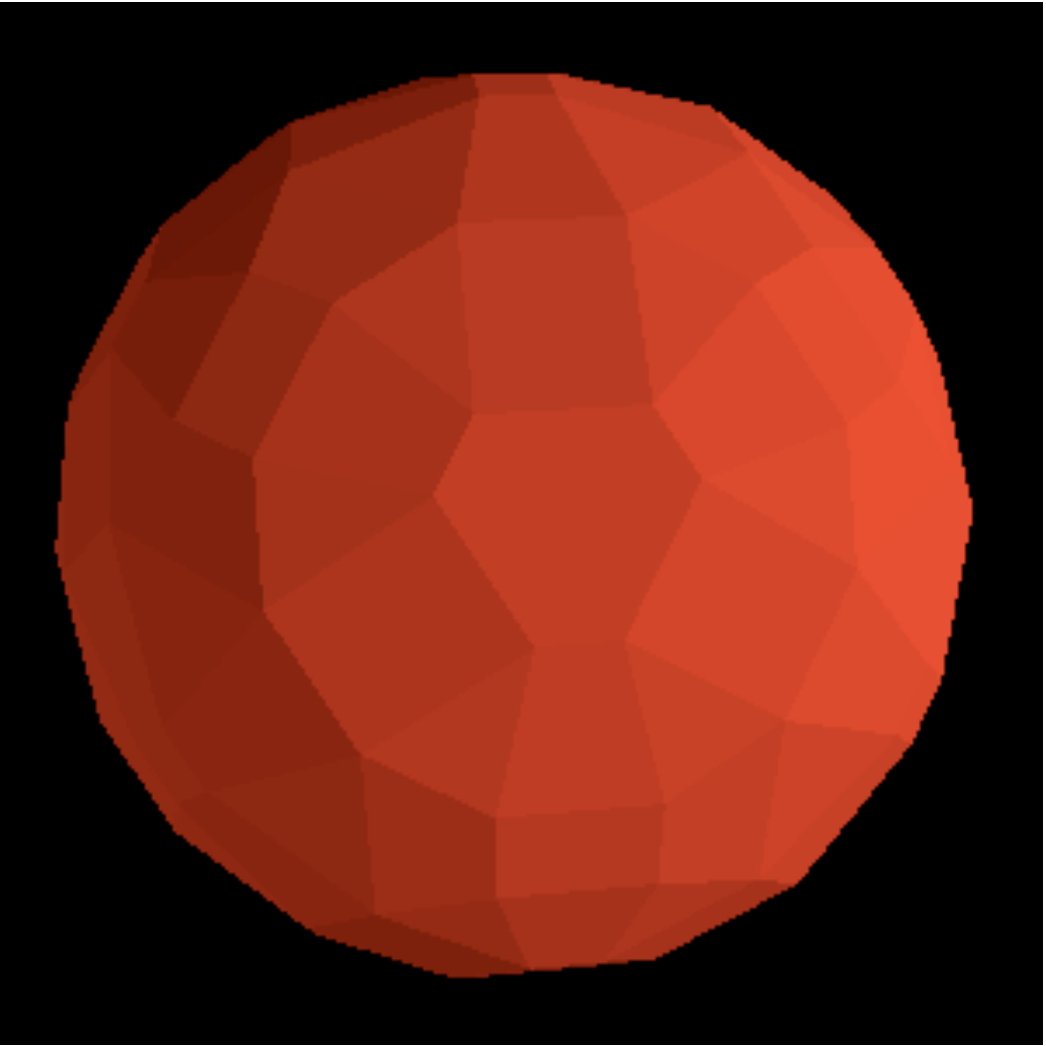}
\caption{On the left is a spherical obstacle.  The central figure shows which cubes in the given cube decomposition of the surrounding space intersect the spherical obstacle.  On the right is the shadow of the appropriate Rips complex, which is essentially the approximation of the obstacle our algorithm uses.}
\end{figure}

We also wish to mention an interesting result of Caraballo related to navigation problems.  Caraballo's result states in the case of $\R^n$ that:

\begin{theorem}\cite{Caraballo}
Let $C$ be a compact subset of $\R^n$.  For any point $q \in X$ and for almost every $r > 0$:
    $$Vol_{n-1}((d_C^{-1}(r)) \cap B^n(q, 2r)) \leq 4^{n+1}r^{n-1},$$
where $d_C(x) := d(x,C)$ and $B^n(q,2r)$ is the $n$-ball of radius $2r$ about $q$.
\end{theorem}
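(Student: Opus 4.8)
The plan is to exploit the geometry of the open neighborhood $N_r = N_r(C) := \{x : d_C(x) < r\} = \bigcup_{c\in C} B(c,r)$, since $d_C^{-1}(r) = \partial N_r$. Recall that $d_C$ is $1$-Lipschitz, hence differentiable at $\mathcal{L}^n$-a.e.\ point, with $|\nabla d_C| = 1$ a.e.\ on $\{d_C > 0\}$, and the coarea formula gives $\int_A |\nabla d_C|\,dx = \int_0^\infty Vol_{n-1}(d_C^{-1}(t)\cap A)\,dt$ for every Borel $A$. Applied with $A = B^n(q,\rho_0)$ this already shows the level sets have locally finite $(n-1)$-volume for a.e.\ $t$; applied to $\log d_C$ on the \emph{fixed} region $\{x : |x-q| < 2\,d_C(x)\}$ it even yields $\int_0^{T} Vol_{n-1}(d_C^{-1}(t)\cap B^n(q,2t))\,dt/t < \infty$ for each $T$. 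However, these are integrated bounds and do not on their own preclude $Vol_{n-1}(d_C^{-1}(t)\cap B^n(q,2t))$ from concentrating in a thin set of values of $t$; so I would instead establish a pointwise, scale-$r$ regularity statement for $\partial N_r$ directly.

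The structural fact driving that is that $N_r$ satisfies a uniform \emph{interior ball condition} of radius $r$: for each $x \in d_C^{-1}(r) = \partial N_r$, pick a nearest point $c(x) \in C$ (it exists since $C$ is compact); then $|x - c(x)| = r$ and $B(c(x),r) \subseteq N_r$ because $c(x)\in C$, so this open $r$-ball touches $\partial N_r$ at $x$ with outward unit direction $\nu(x) := (x-c(x))/r$. Fix a finite covering of $S^{n-1}$ by $M_n$ caps of angular radius $\pi/8$ with axes $e^{(1)},\dots,e^{(M_n)}$, and put $\Sigma_j := \{x \in \partial N_r : \nu(x) \text{ lies in the } j\text{-th cap}\}$. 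I would show that each $\Sigma_j$ is a Lipschitz graph at scale $r$: if $x, x' \in \Sigma_j$ lie in a common ball $B^n(y,\rho)$ with $\rho$ a small fixed fraction of $r$, then $x' \notin B(c(x),r)$ (since $x' \notin N_r$), so $|x' - c(x)|^2 \ge r^2$; writing $x' - x = s\,e^{(j)} + w$ with $w \perp e^{(j)}$ and expanding, and using that $\nu(x)$ makes angle $\le \pi/8$ with $e^{(j)}$, this inequality (together with its twin obtained by interchanging $x$ and $x'$) forces $|s| \le K\,|w|$ for an absolute constant $K$, and in fact $K < 1$ once $\rho/r$ is small enough. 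Hence $\Sigma_j \cap B^n(y,\rho)$ is contained in the graph over $(e^{(j)})^\perp$ of a $K$-Lipschitz function on a disc of radius $\rho$, so $Vol_{n-1}(\Sigma_j \cap B^n(y,\rho)) \le \sqrt{1+K^2}\,\omega_{n-1}\rho^{n-1}$, with $\omega_{n-1}$ the volume of the unit $(n-1)$-ball. Summing over $j$ gives $Vol_{n-1}(d_C^{-1}(r)\cap B^n(y,\rho)) \le c_n\,\rho^{n-1}$ for every ball of radius $\rho \le c' r$, with $c_n, c'$ depending only on $n$ (a quantitative form of the classical ``interior ball condition $\Rightarrow$ local Lipschitz-graph'' principle).

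To conclude, cover $B^n(q,2r)$ by $N_n$ balls of radius $c'r$ (with $N_n$ depending only on $n$) and sum the local estimate, obtaining $Vol_{n-1}(d_C^{-1}(r)\cap B^n(q,2r)) \le N_n c_n (c'r)^{n-1}$, i.e.\ a dimensional constant times $r^{n-1}$. The remaining work is purely quantitative — choosing the cap aperture, the ratio $\rho/r$, and the final covering economically enough that the dimensional constant is at most $4^{n+1}$ — and I expect this constant-chasing to be the only real obstacle; the conceptual content, that the radius-$r$ interior ball condition makes $d_C^{-1}(r)$ a boundedly-many-sheeted Lipschitz set at scale $r$, is robust. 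Two remarks. First, restricting to $B^n(q,2r)$ rather than a ball of radius a large multiple of $r$ is essential: for a union of $r$-balls arranged to barely fail to cover, $Vol_{n-1}(\partial N_r \cap B^n(q,t))$ can grow like $t^n/r$ when $t \gg r$, so no estimate of the stated form survives at larger scales. Second, the argument needs no exceptional set — it shows $\partial N_r$ has locally finite $(n-1)$-volume, hence is $\mathcal{L}^n$-null, for \emph{every} $r > 0$ — so the ``almost every $r$'' in the statement can in fact be strengthened to ``every $r$''.
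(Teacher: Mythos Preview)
The paper does not prove this theorem at all: it is stated with the citation \texttt{\textbackslash cite\{Caraballo\}} and used as a black box to derive the two corollaries that follow. There is therefore no ``paper's own proof'' to compare your attempt against.

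That said, your approach is sound in outline. The identification $d_C^{-1}(r)=\partial N_r$ is correct, the uniform interior $r$-ball condition is the right structural observation, and the standard consequence --- that $\partial N_r$ decomposes into boundedly many pieces, each a Lipschitz graph at scale comparable to $r$ --- does yield a bound of the form $Vol_{n-1}(d_C^{-1}(r)\cap B^n(q,2r))\le C(n)\,r^{n-1}$ for \emph{every} $r>0$, as you note. Two small points worth tightening: the choice $x\mapsto c(x)$ need not be measurable, so the $\Sigma_j$ are a priori only subsets and you should phrase the bound via outer measure (or invoke a measurable selection); and the Lipschitz-graph containment for $\Sigma_j\cap B^n(y,\rho)$ needs a one-line appeal to McShane/Kirszbraun to extend the partially defined function to the full disc before you can quote the area bound $\sqrt{1+K^2}\,\omega_{n-1}\rho^{n-1}$.

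The genuine gap you already flag is the constant. Covering $S^{n-1}$ by caps of aperture $\pi/8$ costs roughly $(C_1)^{n}$ caps, and covering $B^n(q,2r)$ by balls of radius $c'r$ costs roughly $(2/c')^n$ balls; multiplying these with the per-piece factor $\sqrt{1+K^2}\,\omega_{n-1}(c')^{n-1}$ gives a constant of the form $(C_2)^n$ whose base is not obviously $\le 4$. Getting down to $4^{n+1}$ likely requires either a more economical decomposition than ``caps $\times$ covering balls'' or a different argument altogether (Caraballo's original proof presumably supplies one), so you should not claim the stated constant without actually carrying out that optimization.
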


Let us reinterpret Caraballo's result:

\begin{corollary}
Let $X \subset \R^n$ be a space to be explored.  Let $\Delta$ denote the boundary of the set of all points in $X$ of distance at most $r$ from an obstacle point of $X$.  For almost all $r$,
\begin{enumerate}
\item The volume of $\Delta$ is finite.
\item There is a uniform bound $K(r)$ on the volume of $\Delta \cap B(q,2r)$ for any point $q \in X$.
\end{enumerate}
\end{corollary}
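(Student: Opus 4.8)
The plan is to derive both assertions directly from Caraballo's theorem, the only real work being to present $\Delta$ in a form to which that theorem applies. Write $C$ for the set of obstacle points of $X$. Since $X$ has finite diameter and every point of $\Delta$ lies within distance $r$ of $C$, both $C$ and $\Delta$ are bounded subsets of $\R^n$; replacing $C$ by the closure of its intersection with a sufficiently large closed ball, we may assume $C$ is compact without changing $d_C$ on a neighborhood of the sublevel set $\{x : d_C(x)\le r\}$. That sublevel set is precisely the set whose boundary is $\Delta$, and since $d_C$ is continuous its topological boundary is contained in the level set $d_C^{-1}(r)$. Hence $\Delta \subseteq d_C^{-1}(r)$, and in particular $\Delta \cap B(q,2r) \subseteq d_C^{-1}(r)\cap B^n(q,2r)$ for every $q$.

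Granting this containment, part (2) is immediate. Interpreting $Vol_{n-1}$ as $(n-1)$-dimensional Hausdorff measure (so that it is monotone and countably subadditive), for almost every $r>0$ Caraballo's theorem gives $Vol_{n-1}(\Delta\cap B(q,2r)) \le Vol_{n-1}(d_C^{-1}(r)\cap B^n(q,2r)) \le 4^{n+1}r^{n-1}$, and the right-hand side does not depend on $q$; so we may take $K(r) = 4^{n+1}r^{n-1}$.

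For part (1) I would cover $\Delta$ by finitely many of the balls used in part (2). Since $X$ is bounded, $\Delta$ lies in some ball $B(x_0,D)$. Tile $\R^n$ by axis-parallel cubes of side length $r$ and keep only the $N = N(n,D/r)$ cubes meeting $B(x_0,D)$; each such cube is contained in the ball of radius $\frac{\sqrt{n}}{2}\,r < 2r$ about its center $q_i$, and these $N$ balls cover $\Delta$. By countable subadditivity, for almost every $r$ we get $Vol_{n-1}(\Delta) \le \sum_{i=1}^{N} Vol_{n-1}(\Delta\cap B(q_i,2r)) \le N\cdot 4^{n+1}r^{n-1} < \infty$.

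Every step here is routine; the only point that needs genuine care is the first one — verifying that $\Delta$ sits inside the $r$-level set of the distance function to a suitable \emph{compact} set, so that Caraballo's hypotheses are actually met — together with the bookkeeping that the ``almost every $r$'' quantifier is simply inherited from the theorem and that the bound $K(r)$ one extracts depends only on $r$ (and on the ambient dimension $n$), not on $X$ or on $q$.
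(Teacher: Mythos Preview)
The paper does not actually give a proof of this corollary: it is presented simply as a ``reinterpretation'' of Caraballo's theorem and left to the reader. Your argument is exactly the natural way to make that reinterpretation precise, and the two key observations --- that $\Delta \subseteq d_C^{-1}(r)$ for a suitable compact $C$, and that finiteness of $Vol_{n-1}(\Delta)$ follows from the local bound by covering the bounded set $\Delta$ with finitely many balls of radius $2r$ --- are correct and constitute essentially the only sensible route.

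One small slip: you cover $\Delta$ by cubes of side length $r$ and assert each is contained in the ball of radius $\frac{\sqrt{n}}{2}\,r < 2r$ about its center, but that inequality fails once $n \ge 16$. The fix is trivial (take cubes of side length $2r/\sqrt{n}$, say), and does not affect the argument. You might also note that the cube centers $q_i$ need not lie in $X$, while the paper's statement of Caraballo's theorem is phrased for $q \in X$; however the bound $4^{n+1}r^{n-1}$ is independent of $q$ and the theorem in fact holds for arbitrary $q \in \R^n$, so this is only a cosmetic mismatch with the paper's phrasing.
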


For an example of a surprising implication of Caraballo's Theorem, consider a fractal curve $C \subset \R^2$.  Let $X$ consist of all points within $r+\epsilon$ of $C$, and let $\Delta$ denote the boundary of the set of all points in $X$ of distance at most $r$ from its boundary (that is, points of distance $\epsilon$ from $C$).  If $\epsilon = 0$, then $\Delta = C$ and $\Delta$ has infinite length.  But Caraballo's theorem says that for almost all other $\epsilon$, the length of $\Delta$ is finite!

The importance of Caraballo's Theorem is in the uniform control on how much a robot is meant to explore given a space $X$.  Although this control on volume proves nothing about an algorithm's runtime in higher dimensions, it suggests that our tasks are at least close to solvable.

In environments with finitely many obstacle points, Caraballo's Theorem has more directly applicable consequences.  We claim that, in such environments and for $n = 2$ in particular, Caraballo's Theorem justifies the quadratic competitiveness of the CBUG algorithm (see Theorem \ref{thm:CBUGupper} and the following discussion), as follows.

\begin{corollary}
Fix $n = 2$ and an environment $X$ with finitely many obstacle points.  For every $r > 0$, there exists some constant $k=k(r)$ such that, if Bob traverses some path $\gamma$ of finite length $l$ by moving along the boundary of an obstacle, then $l$ is at most $k\cdot A/(2r)$, where $A$ is the area swept out by Bob while traversing $\gamma$.
\end{corollary}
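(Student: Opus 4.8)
The plan is to exploit the local volume control furnished by Caraballo's Theorem together with the fact that, in an environment with finitely many obstacle points, the curve $\gamma$ that Bob traces along an obstacle boundary lies in the set $\Delta$ (or, after choosing a suitable $r$, very close to it), and to cover $\gamma$ by finitely many balls of radius $2r$ on which the swept area is not wasted. Concretely, first I would fix $r$ so that the conclusions of the reinterpreted corollary apply: the volume (length) of $\Delta = d_C^{-1}(r)$ is finite, and there is a uniform bound $K(r)$ on $\mathrm{Vol}_1(\Delta \cap B(q,2r))$ for every $q$. Since $C$ has finitely many points, $\Delta$ is a finite union of arcs of circles, so $\Delta$ is rectifiable with some total length $L = L(r) < \infty$.

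Next I would set up a packing argument. Choose a maximal collection of points $q_1,\dots,q_m$ on $\gamma$ that are pairwise at distance at least $2r$ apart; by maximality the balls $B(q_j, 2r)$ cover $\gamma$, and the balls $B(q_j, r)$ are disjoint. Along each sub-arc of $\gamma$ inside $B(q_j, 2r)$, Bob's swept region contains a strip of width $r$ (Bob has radius $r$), so the area Bob sweeps while traversing that portion is comparable to $r$ times the length of that portion; but that length is at most $K(r)$ by Caraballo's bound. On the other hand, disjointness of the $B(q_j,r)$ forces the total area $A$ swept by Bob to be at least a fixed fraction of $m \cdot r^2$ (each $B(q_j,r)$ contributes a controlled chunk of newly swept area because the center $q_j$ is far from the others). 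Combining, $l \le \sum_j (\text{length of }\gamma \text{ in } B(q_j,2r)) \le m\,K(r)$, while $A \ge c\, m\, r^2$, so $l \le (K(r)/(c r^2)) \cdot A = k(r)\cdot A/(2r)$ after absorbing constants into $k(r)$. This is exactly the desired linear relation $l \le k \cdot A/(2r)$ with $k$ depending only on $r$.

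The main obstacle I expect is the lower bound $A \ge c\, m\, r^2$: I must argue that the disjoint balls $B(q_j, r)$ each contribute a definite amount of area that is genuinely \emph{swept} by Bob — i.e. occupied by the disc of radius $r$ centered somewhere on $\gamma$ — rather than merely being near $\gamma$. Because $q_j \in \gamma$ is the center of Bob's disc at some moment, the entire disc $B(q_j, r)$ is swept, and these discs are pairwise disjoint, so $A \ge m \cdot \pi r^2$; the subtlety is only that $\gamma$ has finite length and may wiggle, but finiteness of $l$ guarantees the maximal packing is finite, so $m < \infty$ and the estimate $m \le l/(2r)$-type bounds are automatic. A secondary technical point is handling the "almost all $r$" caveat in Caraballo's Theorem: the statement quantifies over "every $r > 0$" for $k(r)$, so strictly I would either restrict to the full-measure set of good radii or note that the conclusion for a dense set of $r$, combined with monotonicity of the swept area in $r$, suffices to recover all $r$ up to a change of constant. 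Once these two points are dispatched, the chain of inequalities above yields the corollary.
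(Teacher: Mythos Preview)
Your argument is essentially the paper's: both bound $l$ above by covering $\gamma$ with balls and invoking Caraballo's local length bound $K(r)$ per ball, and bound $A$ below by observing that disjoint balls centered on $\gamma$ lie inside Bob's swept region. The only real differences are cosmetic---the paper uses a fixed lattice of balls of radius $r/2$ rather than a maximal $2r$-net on $\gamma$---except for one point: your treatment of the ``almost all $r$'' caveat is weaker. The paper uses the hypothesis of finitely many obstacle points to argue that the function $t \mapsto \mathrm{Vol}_1(d_C^{-1}(t)\cap B(q,2t))$ varies continuously, so Caraballo's almost-everywhere bound extends to every $r$; your suggested monotonicity-in-$r$ argument does not obviously work, since changing $r$ changes the obstacle boundary, the path $\gamma$, the swept area $A$, and the constant $K(r)$ simultaneously, and there is no clear monotone relation among them.
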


\begin{proof}

Consider the function $f: \R \to \R \cup \{\infty\}$ defined so that $f(t)$ is the $1$-volume of the boundary of the $t$-neighborhood of the obstacles of $X$.  By Caraballo's result, for almost all $t$, $f(t)$ is finite with an explicit upper bound.  As there are finitely many obstacle points in $X$, it is a small exercise to see that $f$ varies continuously with $t$.  Thus, Caraballo's result shows that $f(t)$ is everywhere finite with an explicit upper bound.  If we cover $\gamma$ with $N$ balls of radius $r/2$, then Caraballo's result tells us that, since balls of radius $r/2$ are contained in balls of radius $2r$,
    $$l \leq N\cdot K(r).$$

We now choose a particular covering of $\gamma$ by balls of radius $r/2$.  Take the lattice of points in $\R^2$ such that each coordinate of each point is an integer multiple of $r/(2\sqrt{2})$.  Note the maximal distance from any point in $\R^2$ to a lattice point is $r/4$.  Place a ball of radius $r/2$ about each lattice point, so that every point in $\R^2$ is in some ball.  Now, keep only those balls which intersect $\gamma$.  Let $N$ denote the number of such balls.

Each of the balls in our chosen covering has diameter $r$ and contains a point of $\gamma$, so is completely contained in the $r$-neighborhood of $\gamma$.  Also, any given point of the $r$-neighborhood of $\gamma$ is contained in at most $4$ balls, by the choices made in placing the balls.  Thus, dividing the sum of the volumes of the $N$ balls by $4$ gives a lower bound on $A$:
    $$N\cdot Vol_2(B(r/2))/4 \leq A.$$

Combining the two above calculations, we obtain:
    $$\frac{l}{2^6r} = \frac{l}{K(r)} \leq N \leq \frac{4}{Vol_2(B(r/2))}A = \frac{2^4}{\pi r^2}A.$$
Solving for $l$, we have:
    $$l \leq \frac{2048}{\pi}\frac{1}{2r}A.$$
This provides the desired $k$.

\end{proof}

Note Bob always follows a $1$-dimensional path, so Caraballo's result will not help us estimate lengths of paths when $n > 2$, even for nice values of $r$ and finitely many obstacle points.  However, we note there are statements of similar results for higher values of $n$ and finitely many obstacle points, estimating the $(n-1)$-volume of the boundary of the neighborhood of obstacle points.

For finite numbers of obstacle points, $\Delta$ consists of arcs of circles - i.e. $\Delta$ is a smooth curve along which a robot can roll.  The Caraballo result says that $\Delta$ has finite length, but it is important to note this does not take into account time taken for a robot to change directions.  That is, it is assumed that a robot can turn instantaneously.  If turning time is taken into account, then there is probably no upper bound on runtime for an algorithm tracing $\Delta$.

\section{Observations and Improvements}\label{sec:improvements}

We are able to make a number of improvements to the algorithms described.

\subsection{Sampling Improvement for $COVER$, $SEARCH$, and $NAV$}\label{sec:SamplingImprovement}
For this paper, we have chosen to discretely sample the unknown environment $X$ via the centers of a grid of cubes.  These centers form a lattice (in fact, a cubical lattice:  they are the vertices of the dual cubical tiling).  Let the \emph{diameter} of a lattice denote the maximum distance between points in a primitive cell of the lattice -- that is, a fundamental domain of the quotient of $\R^n$ by the translational symmetries of the lattice.   The only mathematical properties of the lattice we used were that every point in $\R^n$ was within $r$ of a point of the lattice, and that the diameter was at most $\epsilon$.  In fact, other lattices would work.  One should be able to choose a more efficient lattice structure to sample the space with fewer lattice points.  This problem is closely related to that of sphere-packing.  Duals of lattices associated to optimal sphere-packing seem to reduce the number of lattice points per volume needed.  In particular, using the duals of the lattices associated with Gauss's hexagonal sphere-packing in dimension 2 or close packings in dimension 3 should yield better results in those dimensions.  Indeed, if one could find a good way of encoding it, even a good irregular sphere-packing would yield a better sampling of $X$.

\subsection{Taking Diagonals Improvement for $COVER$, $SEARCH$ and $NAV$}
Our complexity estimates in part relied on the distances between centers cubes sharing a codimension-1 face.  However, one can obtain similar estimates even if one allows Bob to travel from the center of one cube to the center of any other adjacent cube, sharing a face of arbitrary codimension.  This may worsen the complexity estimates, but should improve average-case runtime by a factor of up to $\sqrt{n}$.

\subsection{Noticing $T$ Improvement for $SEARCH$}
While trying to solve the $SEARCH_n$ problem, it will occasionally happen that Bob finds out where $T$ is but cannot move its center directly to $T$ because of nearby obstacles (for instance, when $\epsilon < (\sqrt{2}-1)r$ and $T$ is close to the center of a gap in obstacles slightly smaller than Bob).  Whenever $T$ is discovered, the Boxes algorithm should begin to treat $SEARCH_n$ as if it were a $NAV_n$ problem, and use the improvements below for choosing the cube $D$ referenced in the GraphTraverse algorithm and travelling expediently to $T$.

\subsection{Maximal Coloring Improvement for $COVER$, $SEARCH$ and $NAV$}
One straightforward improvement to the algorithms is to take full advantage of knowing a point on the boundary of an obstacle.  Currently, if Bob runs into an obstacle, only the cube $D$ that Bob was trying to get to is colored Red.  But Bob knows that many other cubes should also be colored Red.  The Maximal Coloring Improvement is, whenever an obstacle point is encountered, to color all cubes Red that have all corner points within distance $r'$ of the given obstacle point.  As a cube is convex, this is equivalent to saying a robot of radius $r'$ with center in the cube will intersect the obstacle point.

If we are solving $SEARCH$ or $NAV$ and we know $T$ is in a Red cube, stop.  $T$ cannot be reached.

This improvement will cause many White and Pink cubes to be colored Red, and occasionally will cause a Yellow cube to be colored Red.  To take this into account, Bob needs to check and see if the Yellow cube $C'$ colored Red is directly between the cube $C_S$ containing $S$ and the current cube $C$ in the spanning tree generated by the CBoxes and Boxes algorithms.  If $C'$ is between $C_S$ and $C$, then Bob should immediately return to the cube before $C'$, ignoring any White neighbors of cubes between $C'$ and $C$.  Either these neighbors will be explored via some other route, or they are not reachable by a robot of radius $r'$ and so should not be explored.

\subsection{Gray Improvement for $NAV$}
For the modified $NAV_n$ problem, another improvement may be made by adding a new color.  As written, GraphTraverse will explore every possible White cube, even if exploration would give Bob no new information on how to get to $T$.  For instance, consider a space with a very large sphere about $S$ as an obstacle separating $S$ from $T$.  Place a hole in the sphere so that a robot of radius $r+\epsilon$ can fit through.  Imagine that Bob has explored the entire inner boundary of the sphere, and finally reaches the hole.  Clearly, Bob should exit the sphere, as exploring any more boxes inside the sphere would just require backtracking, and Bob knows it.  This knowledge should be incorporated into the algorithm, and can be as follows.  We create a new color designation:
    \begin{itemize}
    \item Gray:  Never to be explored.
    \end{itemize}
Bob doesn't know what's in a Gray cube, but Bob will never go into one.  If there is ever a connected component $Z$ of the union of all White cubes that doesn't contain $T$, color every cube in $Z$ Gray.  Any path through centers of cubes to $T$ through $Z$ can be replaced by a path not through $Z$ (eventually, entirely through Yellow cubes).

When combined with the Pink color designation for the Boxes algorithm, note that which cubes are White and which are Gray should be recomputed by Boxes between executions of GraphTraverse.

\subsection{Greedy Improvement for $NAV$}
Coloring cubes Gray can potentially save Bob unnecessary exploration time by helping decide which cube $D$ to explore next while executing the GraphTraverse algorithm.  In fact, there is an even more efficient way of choosing which cube $D$ to explore in the GraphTraverse algorithm.  At every step, choose $D$ as follows.  If there is a path from the current cube $C$ through centers of cubes to $T$ such that all cubes on the path are colored White except possibly at the endpoints, then find a shortest such path $\gamma$.  Choose $D$ to be the next cube along $\gamma$ from $C$.  If there is no path from $C$ though centers of White cubes to $T$, there is no need to explore any adjacent unexplored cubes.  In fact, with the Gray Improvement, there can be no adjacent White cubes:  all adjacent unexplored cubes can be colored Gray.  In this situation, GraphTraverse will have Bob back up to the last cube which is not surrounded by non-White cubes.  If ever there does not exist a path from a previously explored Yellow cube through White cubes to $T$, stop:  no path exists from $S$ to $T$ for a robot of radius $r+\epsilon$.  In other words, choose $D$ greedily, and this is guaranteed to work.  Note this way of choosing $D$ does not actually require the introduction of the color Gray, and this improvement supersedes the Gray Improvement.

We note that this improvement can in particular be applied to $Boxes_{\epsilon}$ when in $2$-dimensional environments.  Compared to CBUG, $Boxes$ has two drawbacks:  the requirement of nonconstant memory, and the introduction of the clearance parameter $\epsilon$, particularly in the dependence on $\epsilon$ in the upper bound on competitiveness.  However, both algorithms are optimally competitive with respect to modified $l_{opt}$, and in many environments the Greedy Improvement will help $Boxes$ by always proceeding towards the target instead of exploring the entirety of an obstacle.

\subsection{Wide Open Spaces Improvement for $COVER$, $SEARCH$ and $NAV$}

Currently, our algorithms use very small cubes to explore $X$.  If $X$ has a large open area to explore, this can be wasteful.  Just as the Maximal Coloring Improvement takes advantage of where obstacles are, we should also take advantage of where obstacles are not.  We can do this with the following observation.  Let $N$ denote the $r'$-neighborhood of the center of a White cube $C$.  If $N$ is contained in the union of $r$-neighborhoods of all (nearby) centers of Yellow cubes, then we know even without visiting $C$ that $C$ should be colored Yellow (or some color designating that the cube need not be visited).

If the cubes which are visited are chosen carefully, this can greatly reduce the number of cubes which need to be explored.  We note, however, that this improvement is mostly unnecessary when using the following Subdivision Improvement, which is similar in essence.

\subsection{Subdivision Improvement for $COVER$, $SEARCH$ and $NAV$}

Our algorithm as stated subdivides the ambient space into cubes which are as small as necessary to prove our theorems.  But boxes of side length less than $\epsilon/\sqrt{n}$ can be too small in large, sparsely obstructed environments.  The only time it was necessary for us to use such small cubes was when proving our algorithm successfully executes in task instances which require Bob to pass through tight corridors, of diameter greater than $r+\epsilon$ but not by much.  We may search for such tight spaces using a much coarser exploration grid (i.e. much larger boxes), and only subdivide one of these larger boxes into smaller boxes when necessary.  

We present here the precise subdivision algorithm for $NAV$.  The other algorithms are similar.

Begin by breaking $X$ into a grid of axis-parallel cubes of side length $l'$ on the order of $r$, so that the centers of adjacent cubes (sharing a face with arbitrary codimension) are no more than $2r$ apart -- say, $l' := r/(2\sqrt{n})$.  Start with a fixed-radius bounding ellipsoid and execute Boxes with the following changes.  When an obstacle point, $p$, is encountered, color cubes Red like in the Maximal Coloring Improvement.  Subdivide a non-Red cube into $3^n$ subcubes if its side length is greater than $l$ and it intersects the $r'$ ball centered at $p$.  Color each of the newly created cubes as appropriate:  White by default, Red if all corners are within $r'$ of the obstacle point, Pink if entirely outside of the virtual bounding ellipsoid, and Yellow when the center has previously been visited.  Subdivided cubes are adjacent to any cube with which they share any portion of a face.  If execution stops without reaching $T$, color every Yellow cube White and restart with the new set of cubes.  Continue until no new subdivisions are created.  If at this point $T$ has not been reached, expand the bounding ellipsoid and repeat.
    
    \begin{proof}[Proof of correctness]  Suppose the ellipsoid is large enough to contain an $r'$ path, $\rho$, from $S$ to $T$.  Then we claim that the algorithm finds $T$ before expanding the ellipsoid again.  Suppose not.  Then the algorithm goes through an iteration without finding $T$ and without subdividing a cube.  Since $\rho$ is an $r'$ path, it never enters a Red cube.  Thus $\rho$ either never leaves Yellow cubes or it  enters a White cube, $W$, adjacent to a Yellow cube, $Y$.  
    
    In the first case, Bob visited the center of the cube $C$ containing $T$ but was unable to move straight from the center to $T$, so Bob must have hit an obstacle point, $p$.  The $r'$ neighborhood of $p$  contains Bob's center and thus intersects $C$.  Since $\rho$ also intersects $C$, $C$ is not entirely contained in the $r'$ neighborhood of $p$.  Thus $C$ should have been subdivided, and we have reached a contradiction.
    
    In the second case, Bob attempted to move from the center of  $Y$ to the center of $W$ but encountered an obstacle point.  Similar to the reasoning in the previous paragraph, either $Y$ or $W$ should have been subdivided.
    \end{proof}

The Subdivision Improvement means that, for the vast majority of the time, our algorithms will quickly move about in large steps.  Although we give no analysis here, the length of the path travelled using the Subdivision Improvement can be made to be at most a constant times the length of the path travelled without the improvement by limiting the number of times the algorithm can restart within each ellipsoid.  Thus, the upper bound on competitiveness with the Subdivision Improvement is in the same complexity class as the basic algorithm.  In typical cases, with the Subdivision Improvement our algorithms should not only run faster but need far less memory to execute.

\bibliographystyle{plain}
\bibliography{refs-BKS1}

\end{document}